\newcommand{\size}[1]{\ensuremath{|#1|}}
\newcommand{\Size}[1]{\ensuremath{\left|#1\right|}}
\newcommand{\Ceil}[1]{\ensuremath{\left\lceil#1\right\rceil}}
\newcommand{\Floor}[1]{\ensuremath{\left\lfloor#1\right\rfloor}}
\newcommand{\lra}[1]{\ensuremath{(#1)}}
\newcommand{\lrA}[1]{\ensuremath{\left(#1\right)}}
\newcommand{\lrB}[1]{\ensuremath{\left[#1\right]}}
\newcommand{\lrC}[1]{\ensuremath{\left\{#1\right\}}}
\newcommand{\EEE}[1]{\ensuremath{\mathbb{E}\lrB{#1}}}
\newcommand{\qand}{\quad\text{ and }\quad}
\def\B{\mathcal{B}}
\def\F{\mathcal{F}}
\def\O{\mathcal{O}}
\def\P{\mathcal{P}}
\def\R{\mathcal{R}}
\def\T{\mathcal{T}}
\def\W{\mathcal{W}}
\def\OPT{\mbox{OPT}}
\def\SOL{\mbox{SOL}}
\def\vertexnum{n}
\def\rnum{m}
\def\vehiclenum{h}
\DeclareAcronym{mDaRP}{
short = mDaRP,
long = multi-vehicle dial-a-ride problem
}
\DeclareAcronym{mTSP}{
short = mTSP,
alt = multi-depot TSP,
long = multi-depot traveling salesman problem
}
\DeclareAcronym{CVRP}{
short = CVRP,
long = capacitated vehicle routing problem
}
\DeclareAcronym{ITP}{
short = ITP,
long = iterated tour partition
}
\DeclareAcronym{DaRP}{
short = DaRP,
long = dial-a-ride problem
}
\DeclareAcronym{SCP}{
short = SCP,
long = stacker-crane problem
}
\DeclareAcronym{TSP}{
short = TSP,
long = traveling salesman problem
}
\newtheorem{theorem}{Theorem}
\newtheorem{lemma}{Lemma}
\newtheorem{corollary}{Corollary}
\title{Improved Approximations for Dial-a-Ride Problems}
\author[1,2]{Jingyang Zhao}
\author[1]{Mingyu Xiao}
\affil[1]{University of Electronic Science and Technology of China, Chengdu, China}
\affil[2]{Kyung Hee University, Yongin-si, South Korea}
\date{}
\begin{document}

\maketitle

\begin{abstract}
The multi-vehicle dial-a-ride problem (mDaRP) is a fundamental vehicle routing problem with pickups and deliveries, widely applicable in ride-sharing, economics, and transportation.
Given a set of $n$ locations, $h$ vehicles of identical capacity $\lambda$ located at various depots, and $m$ ride requests each defined by a source and a destination, the goal is to plan \emph{non-preemptive} routes that serve all requests while minimizing the total travel distance, ensuring that no vehicle carries more than $\lambda$ passengers at any time.
The best-known approximation ratio for the mDaRP remains $\O(\sqrt{\lambda}\log m)$.

We propose two simple algorithms: the first achieves the same approximation ratio of $\O(\sqrt{\lambda}\log m)$ with improved running time, and the second attains an approximation ratio of $\O(\sqrt{\frac{m}{\lambda}})$.
A combination of them yields an approximation ratio of $\O(\sqrt[4]{n}\log^{\frac{1}{2}}n)$ under $m=\Theta(n)$. Moreover, for the case $m\gg n$, by extending our algorithms, we derive an $\O(\sqrt{n\log n})$-approximation algorithm, which also improves the current best-known approximation ratio of $\O(\sqrt{n}\log^2n)$ for the classic (single-vehicle) DaRP, obtained by Gupta~\emph{et al.} (ACM Trans. Algorithms, 2010).
\end{abstract}

\maketitle

\section{Introduction}
In the \ac{mDaRP}~\citep{DBLP:journals/pvldb/LuoFDG23}, we are given a complete graph $G=(V,E,w)$ with $\size{V}=n$ locations, a weight function $w:E\to\mathbb{R}^+$, a set of $h$ identical vehicles $K\subseteq V$, and a (multi-)set of $m$ requests $\R=\{(s_i,t_i)\}_{i=1}^{m}\subseteq V\times V$.
All vehicles have the same capacity of $\lambda\in\mathbb{N}^+$.
The goal is to plan routes for vehicles to satisfy all requests, while minimizing the total travel distance. Each vehicle is not required to return to its starting point after completing its assigned requests. 
Moreover, for each vehicle, the number of requests served on its route must never exceed its capacity at any point in time.
We assume that the edge-weight function $w$ is a \emph{metric}, satisfying symmetry and the triangle inequality; that is, for any $a,b,c\in V$, $w(a, a)=0$ and $w(a,c)=w(c,a)\leq w(a,b)+w(b,c)$.
When $h=1$, the \ac{mDaRP} is known as the classic \ac{DaRP}~\citep{stein1978scheduling,psaraftis1980dynamic,DBLP:journals/ior/Cordeau06,ho2018survey}.

In the \ac{mDaRP}, each request $(s_i,t_i)$ corresponds to an item that needs to be transported from the source $s_i$ to the designated destination $t_i$. 
Based on the properties of requests, two versions are commonly considered in the literature: the \emph{preemptive} version and the \emph{non-preemptive} version.
In the preemptive version, after picking up an item at its source, the vehicle is allowed to temporarily drop off the item at some intermediate vertices before delivering it to its destination.  
In the non-preemptive version, once the vehicle picks up an item at its source, it must carry the item until it reaches its destination without any intermediate drop-offs.

As in the previous studies~\citep{DBLP:journals/talg/GuptaHNR10,DBLP:journals/pvldb/LuoFDG23}, we study the non-preemptive \ac{mDaRP}.

\subsection{Related Work}
We briefly review approximation algorithms for the \ac{DaRP} and the \ac{mDaRP}, respectively.

\textbf{DaPR.}
When $\lambda=1$, the \ac{DaRP} is also known as the \ac{SCP}~\citep{DBLP:conf/focs/FredericksonHK76,DBLP:journals/siamcomp/FredericksonHK78}.  
Furthermore, if the source and destination of each request are the same, the \ac{SCP} further reduces to the \ac{TSP}.
For the \ac{TSP}, \citet{christofides1976worst} proposed a $\frac{3}{2}$-approximation algorithm.
Recently, \citet{DBLP:conf/stoc/KarlinKG21,DBLP:conf/ipco/KarlinKG23} improved the ratio to $\frac{3}{2}-10^{-36}$.  
For the \ac{SCP}, \citet{DBLP:conf/focs/FredericksonHK76,DBLP:journals/siamcomp/FredericksonHK78} proposed a well-known $\frac{9}{5}$-approximation algorithm, which remains the best-known to date.

When all requested objects are identical, the \ac{DaRP} becomes the $\lambda$-delivery \ac{TSP}~\citep{anily1999approximation,DBLP:journals/siamcomp/ChalasaniM99,DBLP:journals/siamcomp/CharikarKR01}.
Furthermore, when all request sources coincide with the vehicle depot, it becomes the well-known \ac{CVRP}~\citep{dantzig1959truck}.
Approximation algorithms for the \ac{CVRP} can be found in \citep{HaimovichK85,altinkemer1987heuristics,blauth2022improving,uncvrp}.
Although the \ac{DaRP} is a fundamental vehicle routing problem in Operations Research~\citep{ho2018survey}, research on approximation algorithms is relatively limited.

When $\lambda \ge 1$, the $\frac{9}{5}$-approximation algorithm for the case $\lambda = 1$~\citep{DBLP:journals/siamcomp/FredericksonHK78} directly implies an $\O(\lambda)$-approximation algorithm.
\citet{DBLP:conf/focs/CharikarR98} were the first to propose novel approximation algorithms. 
For the non-preemptive case, they obtain an approximation ratio of $\O(\sqrt{\lambda}\log\vertexnum\log\log\vertexnum)$.  
For the preemptive case, they obtain an approximation ratio of $\O(\log\vertexnum\log\log\vertexnum)$.  
Their approaches are based on the \emph{tree-embedding} technique in~\citep{DBLP:conf/stoc/Bartal98}, which randomly embeds general metric spaces into tree metrics with an expected distortion of $\O(\log\vertexnum\log\log\vertexnum)$.  
This distortion bound was later improved to $\O(\log\vertexnum)$~\citep{DBLP:journals/jcss/FakcharoenpholRT04}, which is asymptotically tight.
Hence, the corresponding approximation ratios improve to $\O(\sqrt{\lambda}\log\vertexnum)$ for the non-preemptive case, and $\O(\log\vertexnum)$ for the preemptive case.

Later, \citet{DBLP:conf/esa/GuptaHNR07,DBLP:journals/talg/GuptaHNR10} introduced the \emph{$\lambda$-Steiner forest} technique~\citep{DBLP:journals/algorithmica/SegevS10}, which greedily covers all requests using \emph{minimum-ratio} $\lambda$-Steiner forests, as the greedy algorithm for the $\lambda$-set cover problem~\citep{williamson2011design}.  
They mainly focused on the non-preemptive case and achieved an approximation ratio of $\O(\min\{\sqrt{\lambda},\sqrt{ n}\}\cdot\log^2 n)$. 
They also showed that their techniques extend to more general \ac{DaRP} settings, including cases where some requests require more than one unit of vehicle capacity.

{
Approximation algorithms for the non-preemptive \ac{DaRP} on paths have also been studied.
\citet{krumke2000approximation} proposed a 3-approximation algorithm, which was later improved to $2.5$ by \citet{hu2009approximation}.
}

In the literature, there exist few results on the inapproximability for the \ac{DaRP}.
For the \ac{TSP}, \citet{karpinski2015new} proved that there is no $\frac{123}{122}$-approximation algorithm unless P=NP.
Although the non-preemptive \ac{DaRP} appears unlikely to admit an $\O(1)$-approximation ratio, the \ac{TSP} lower bound remains the best hardness result.
For the preemptive \ac{DaRP}, \citet{DBLP:conf/approx/Gortz06} proved a super-constant lower bound of $\Omega(\min\{\log^{\frac{1}{4}-\varepsilon}\vertexnum, \lambda^{1-\varepsilon}\})$ for any constant $\varepsilon>0$ under the assumption that all problems in NP cannot be solved by randomized algorithms with expected running time $\O(\vertexnum^{\text{polylog}\  \vertexnum})$.

\textbf{\ac{mDaRP}.}  
The \ac{mDaRP} is a natural generalization of the \ac{DaRP}. 
For the non-preemptive \ac{mDaRP}, \citet{DBLP:journals/pvldb/LuoFDG23} proposed an $\O(\sqrt{\lambda}\log m)$-approximation algorithm that runs in $\O(m^3\log\lambda+m^2\lambda^2\log\lambda)$ time and performs well in practice. 
Unlike previous approximation algorithms for the \ac{DaRP}, their approach uses an \emph{iterative pairing} strategy based on minimum-weight matching, grouping requests before planning routes.

For the preemptive \ac{mDaRP}, \citet{DBLP:conf/esa/GortzNR09,DBLP:journals/talg/GortzNR15} showed that an $\O(\log\vertexnum)$-approximation algorithm can be obtained by extending the $\O(\log\vertexnum)$-approximation algorithm for the preemptive \ac{DaRP}~\citep{DBLP:conf/focs/CharikarR98,DBLP:journals/jcss/FakcharoenpholRT04}.

The \ac{mDaRP} has also been studied under the \emph{makespan} objective, which aims to minimize the maximum travel distance among all vehicles.
For the preemptive case, \citet{DBLP:conf/esa/GortzNR09,DBLP:journals/talg/GortzNR15} proposed an $\O(\log^3 n)$-approximation algorithm, and an $\O(\log n)$-approximation algorithm for the case $\lambda = \infty$.
They also showed that for special classes of metrics induced by graphs excluding any fixed minor (e.g., planar metrics), these ratios can be improved to $\O(\log^2 n)$ and $\O(1)$, respectively.
Moreover, they proposed an $\O(1)$-approximation algorithm for the case $\lambda = 1$, i.e., multi-vehicle \ac{SCP}, which even computes a non-preemptive solution.

\subsection{Our results}
For the \ac{mDaRP}, we design new approximation algorithms. 
The contributions are summarized as follows.

\begin{enumerate}
\item We first propose two simple approximation algorithms that are particularly practical when $\rnum = \Theta(\vertexnum)$.
The first achieves an approximation ratio of $\O(\sqrt{\lambda}\log\rnum)$ with a running time of $\O(\rnum^2)$, whereas the second achieves a ratio of $\O(\sqrt{\frac{\rnum}{\lambda}})$ with a running time of $\O(\rnum^2\log\rnum)$. 
Therefore, the combination of these two algorithms yields a ratio of $\O(\min\{\sqrt{\lambda}\log\rnum,\sqrt{\frac{\rnum}{\lambda}}\})$, which is at most $\O(\sqrt[4]{n}\log^{\frac{1}{2}}n)$ when $\rnum = \Theta(\vertexnum)$ and thus breaks the \emph{barrier} $\O(\sqrt{n})$ when $\rnum = o(\vertexnum^2)$.
Note that for the \ac{mDaRP}, the previous algorithm~\citep{DBLP:journals/pvldb/LuoFDG23} is relatively complex
and achieves a ratio of $\O(\sqrt{\lambda}\log\rnum)$ with a running time of $\O(\rnum^3\log\lambda+\rnum^2\lambda^2\log\lambda)$.

\item Although we may have $\rnum,\lambda\gg \vertexnum$, there are at most $\vertexnum(\vertexnum-1)$ number of distinct requests. Building upon our two previous algorithms and the techniques from~\citep{DBLP:journals/talg/GuptaHNR10}, we give our third algorithm for the \ac{mDaRP}, which achieves a ratio of $\O(\sqrt{\vertexnum\log\vertexnum})$. Notably, this result improves upon the best-known ratio of $\O(\sqrt{\vertexnum}\log^2\vertexnum)$ for the \ac{DaRP}~\citep{DBLP:conf/esa/GuptaHNR07,DBLP:journals/talg/GuptaHNR10}, which has remained unchanged for nearly two decades.
\end{enumerate}

Our algorithms are based on new structural properties that are also valuable for related \acp{DaRP}. 
Specifically, the first algorithm is built on a simple metric graph in which each request is represented by a single vertex, thereby enabling us to use techniques from the well-known \ac{CVRP} (recall that in which, each request involves a single location).
The second algorithm uses a simple yet elegant decomposition of a Steiner forest. 
Although these ideas are surprisingly simple, they yield strong theoretical guarantees through careful analysis.

\section{Preliminaries}
Let $G=(V,E,w)$ denote the input complete graph with $\size{V}=\vertexnum$ and edge-weight function $w:E\to\mathbb{R}^+$.
We call $G$ a \emph{metric graph} if $w$ satisfies the metric properties defined earlier. In this case, $G$ can also be viewed as a \emph{metric space} $(V,w)$.
There is a (multi-)set of $\rnum$ requests $\R=\{(s_i,t_i)\}_{i=1}^{\rnum}\subseteq V\times V$. Each request $\{s_i,t_i\}\in\R$ corresponds to an item that needs to be picked up from the source $s_i$ and delivered to the destination $t_i$.
We let $X=\{s_i\}_{i=1}^{\rnum}$ and $Y=\{t_i\}_{i=1}^{\rnum}$ denote the sets of all sources and destinations, respectively.
Furthermore, there is a set of $\vehiclenum$ vehicles, each with the same capacity $\lambda\in\mathbb{N}^+$.
Their initial locations are given by $K=\{o_i\}_{i=1}^{\vehiclenum}\subseteq V$.

We assume that every vertex in $V$ is either an initial location of a vehicle or the source or destination of a request; otherwise, since $G$ is a metric graph, such a vertex can be removed.
For ease of analyzing the running time of our algorithms, we also assume $\rnum = \Omega(\vertexnum)$, following~\citep{DBLP:journals/pvldb/LuoFDG23}.
Under this assumption, for example, the running time may be expressed simply as $\O(\rnum^2)$; otherwise, it would be $\O(\max\{\vertexnum^2,\rnum^2\})$, since reading the input alone already requires $\O(\vertexnum^2)$ time.
Moreover, for the \ac{mDaRP}, we assume, at the cost of losing a factor of 2, that each location in $V$ hosts at most one vehicle, i.e., $\vehiclenum \leq \vertexnum$. This is because if a vertex hosts multiple vehicles, their routes can be served by a single vehicle with a total weight at most twice the original, by the triangle inequality. Hence, for the \ac{mDaRP}, we have
\begin{equation}\label{simpleass}
\vehiclenum \leq \vertexnum = \O(\rnum).
\end{equation}

We introduce some notations on (multi-)graphs. Let $G'=(V',E', w')$ be a (multi-)graph.
In what follows, we work with multi-edge sets, where the union of two edge sets is taken with multiplicities.

For any edge set $E_s\subseteq E'$, define $w'(E_s)\coloneqq\sum_{e\in E_s}w'(e)$.  
For any subgraph $G_s$ of $G'$, let $V'(G_s)$ and $E'(G_s)$ denote its vertex set and edge set, respectively.  
For any vertex set $V_s\subseteq V'$, let $G'[V_s]$ denote the subgraph of $G'$ induced by $V_s$.

A \emph{walk} $W=v_1v_2\dots v_s$ in $G'$ is a sequence of vertices (vertices may repeat), corresponding to the edge set $E'(W)\coloneqq\{v_1v_2,v_2v_3,\dots,v_{s-1}v_s\}$.  
A \emph{tour} $T=v_1v_2\dots v_s$ is similarly a sequence of vertices, but with the edge set $E'(T)\coloneqq\{v_1v_2,v_2v_3,\dots,v_sv_1\}$.  
The weight of a walk $W$ (resp., tour $T$) is $w'(W)\coloneqq w'(E'(W))$ (resp., $w'(T)\coloneqq w'(E'(T))$).
Note that a sequence of vertices can represent either a walk or a tour, depending on whether the first and last vertices are connected; the context will always clarify which applies.
Moreover, a tour $T$ is a \emph{simple tour} if it contains no repeated vertices, and a \emph{\ac{TSP} tour} in $G'$ if additionally $\size{E'(T)}=\size{V'}$.

Given a walk or tour ${\dots xyz\dots }$ in $G'$, a new walk or tour ${\dots xz\dots }$ can be obtained by skipping the vertex $y$. This operation is called \emph{shortcutting} the vertex $y$ or the edges $xy$ and $yz$. Since the weight function $w'$ is a metric, the new walk or tour has a non-increasing weight by the triangle inequality.

For any integer $p\in\mathbb{N}^+$, let $[p]\coloneqq\{1,2,\dots,p\}$.
Consider a tour $os_{i_1}s_{i_2}\dots s_{i_p}$ visiting $p$ sources of requests in $\{(s_{i_j},t_{i_j})\mid j\in[p]\}$ and the corresponding tour $ot_{i_1}t_{i_2}\dots t_{i_p}$ visiting the $p$ destinations, where $o$ denotes an \emph{optional} vehicle.  
Such a pair of tours, which share the same permutation of indices $i_1i_2\dots i_p$, is called \emph{consistent}.  
More generally, if the index sequences for the sources and destinations are both permutations of $\{i_1,i_2,\dots,i_p\}$ (not necessarily identical), we call them \emph{weakly consistent}.  
Whenever $o$ appears in one tour of a (weakly) consistent pair, it must also appear in the other.

Similarly, let $\T_s$ and $\T_t$ be two sets of tours, where every tour in $\T_s$ consists of no destinations and every tour in $\T_t$ consists of no sources.
We call these sets (weakly) consistent if, for every tour in $\T_s$, there exists a tour in $\T_t$ that is (weakly) consistent with it, and vice versa.
In this case, we must have $\size{\T_s} = \size{\T_t}$.

A walk is \emph{simple} if it contains no repeated vertices; such a walk is called a \emph{path}.
Two paths, or two sets of paths, are said to be (weakly) consistent in the same manner as tours, with the same requirement on the presence of an optional vehicle $o$. 
{(For paths, if $o$ exists, we require that the paths start from $o$.)}
For instance, the tours $os_1s_2$ and $ot_1t_2$ are consistent;  
{the paths $s_1os_2$ and $t_1ot_2$ are not (weakly) consistent.}
For any two sequences $T_1=p_1p_2\dots p_i$ and $T_2=q_1q_2\dots q_j$, we use $T_1\circ T_2$ (resp., $T_1\bullet T_2$) to denote the sequence obtained by concatenating $T_2$ to $T_1$ (resp., removing all elements of $T_2$ from $T_1$). 
For a sequence $T=p_1p_2\dots p_i$, we let $\size{T}=i$.
For instance, if $T_1=12134$ and $T_2=1216$, then $T_1\circ  T_2 = 121341216$, $T_1\bullet T_2 = 34$, and $\size{T_1}=5$.

In the following, we denote by $\OPT$ the weight of an optimal solution to the \ac{mDaRP}.

\section{Two Simple Algorithms for the \ac{mDaRP}}
In this section, we propose two algorithms, denoted by \textsc{Alg.1} and \textsc{Alg.2}.
Recall that $\vehiclenum \leq \vertexnum$ and $\vertexnum = \O(\rnum)$ by~(\ref{simpleass}), and that $X=\{s_i\}_{i=1}^{\rnum}$, $Y=\{t_i\}_{i=1}^{\rnum}$, and $K=\{o_i\}_{i=1}^{\vehiclenum}$ by definition.  
For convenience, we assume in this section that all vertices in $X$, $Y$, and $K$ are pairwise disjoint because we can consider the same vertex as two distinct vertices with zero distance. 
Then $V = X \cup Y \cup K$ and $\vertexnum = \vehiclenum + 2\rnum = \Theta(\rnum)$.

\subsection{The First Algorithm}
Our first algorithm, denoted by \textsc{Alg.1}, is inspired by the well-known \ac{ITP} algorithm for the CVRP~\citep{HaimovichK85}.  
The \ac{ITP} works as follows: (1) compute an $\alpha$-approximate \ac{TSP} tour in $G$; (2) partition this tour into the desired path fragments; and (3) transform each fragment into a feasible tour.
Despite its simplicity, \ac{ITP} achieves a strong approximation guarantee of $\alpha+2$ for the CVRP~\citep{altinkemer1987heuristics}.  
Surprisingly, it has not been explored for approximation algorithms in the \ac{DaRP} setting.

We begin with the \ac{DaRP}, and later show how to extend the method to the \ac{mDaRP} with minor modifications.

To modify \ac{ITP} to work for the \ac{DaRP}, we note that instead of finding an approximate \ac{TSP} tour in $G$ as (1) in \ac{ITP}, we may find two approximate \ac{TSP} tours $T_s$ in $G[X\cup K]$ and $T_t$ in $G[Y\cup K]$ since each request involves two locations. 
Moreover, we may ensure that these two \ac{TSP} tours are consistent, say $T_s=os_{i_1}s_{i_2}\dots s_{i_\rnum}$ and $T_t=ot_{i_1}t_{i_2}\dots t_{i_\rnum}$. 
This is important, as for any fragment $B^i_s=s_{i_{j}}s_{i_{j+1}}\dots s_{i_{j+j'}}$ from $T_s$, there always exists a fragment $B^i_t=t_{i_{j}}t_{i_{j+1}}\dots t_{i_{j+j'}}$ from $T_t$ that is consistent with $B^i_s$.
Then, as (2) in \ac{ITP}, we construct a set of fragments $\B_s$ (resp., $\B_t$) by partitioning $T_s$ (resp., $T_t$). Note that each fragment is a path, and we may require that $\B_s$ is consistent with $\B_t$. 
Last, as (3) in \ac{ITP}, we transform each pair of consistent fragments in  $\B_s\cup\B_t$ into a tour. This can be done by doubling all edges in the pair of consistent fragments, adding two copies of the minimum-weight edge connecting these fragments, and then taking shortcuts.
Let $\T$ be the set of tours obtained. 
Traversing the \ac{TSP} tour in $T_s$ together with the tours in $\T$ yields a feasible \ac{DaRP} solution.

To extend the above idea to the \ac{mDaRP}, we seek two consistent sets of tours: $\T_s$ in $G[X\cup K]$ and $\T_t$ in $G[Y\cup K]$, such that 
(a) each source in $X$ (resp., destination in $Y$) appears in exactly one tour in $\T_s$ (resp., $\T_t$);
(b) each vehicle in $K$ appears in exactly one tour in $\T_s$ and in exactly one tour in $\T_t$;
(c) each tour in $\T_s \cup \T_t$ contains exactly one vehicle.
Given any pair of consistent tours from $\T_s\cup\T_t$, we can apply the same procedure as in the \ac{DaRP} to construct a route that serves all requests in the pair
(see \textsc{Alg.1} in Algorithm~\ref{alg1}).

\begin{algorithm}[t]
\caption{The first algorithm for \ac{mDaRP} (\textsc{Alg.1})}
\label{alg1}
\small
\begin{algorithmic}[1]
\Require An instance $G=(V,E,w)$ of the \ac{mDaRP}.
\Ensure A solution to the \ac{mDaRP}.

\State Compute two consistent sets of tours $\T_s$ and $\T_t$ satisfying conditions (a)-(c) (see Sect.~\ref{part1}).

\State Compute a set of routes $\W$ for all vehicles in $K$ (see Sect.~\ref{part2}).

\State \Return $\W$.

\end{algorithmic}
\end{algorithm}

Next, we show in Sect.~\ref{part1} how to compute the consistent sets of tours $\T_s$ and $\T_t$ satisfying conditions (a)-(c). Then, we show in Sect.~\ref{part2} how to compute the fragments and the routes of the vehicles.

\subsubsection{Compute Two Consistent Sets of Tours}\label{part1}
To compute the consistent sets of tours $\T_s$ in $G[X\cup K]$ and $\T_t$ in $G[Y\cup K]$, we employ an approximate solution to the \aca{mTSP} (\ac{mTSP})~\citep{DBLP:journals/orl/XuXR11}. 

First, we construct a new complete graph $H=(V_H,E_H,\widetilde{w})$, where $V_H=V^1_H\cup V^2_H$, ${V^1_H}=\{k_1,\dots,k_\vehiclenum\}$, and ${V^2_H}=\{l_1,\dots,l_\rnum\}$. Each vertex $k_i\in V^1_H$ refers to a vehicle $o_i\in K$, and each vertex $l_j\in V^2_H$ refers to a request $(s_j,t_j)\in\R$.
For any $x,y\in V_H$, we define the weight function $\widetilde{w}$ as follows:
{\begin{equation}\label{thefunction}
\widetilde{w}(y,x) = \widetilde{w}(x,y)\qand\widetilde{w}(x,y)=
\begin{cases}
2w(o_i,o_j), & x=k_i\in V^1_H, y=k_j\in V^1_H,\\
w(o_i,s_j)+w(o_i,t_j), & x=k_i\in V^1_H, y=l_j\in V^2_H,\\
w(s_i,s_j)+w(t_i,t_j), & x=l_i\in V^2_H, y=l_j\in V^2_H.
\end{cases}
\end{equation}}

Note that $\widetilde{w}$ remains a metric.
By regarding the vertices in $V^1_H$ as \emph{depots}, the graph $H$ forms a valid instance of the \ac{mTSP}, where the objective is to compute a set of \emph{\ac{mTSP} tours} minimizing the total weight, subject to the conditions that: (1) each vertex in $V_H$ is contained in exactly one tour, and (2) each tour contains exactly one depot in $V^1_H$. Note that a tour may consist of a single depot only.

Since $\vehiclenum\leq \vertexnum=\O(\rnum)$ by (\ref{simpleass}), we have $\size{V_H}=\O(m)$.
Therefore, for the \ac{mTSP}, a $2$-approximate solution, denoted by $\T_{st}$, can be found in $\O(\rnum^2)$ time~\citep{DBLP:journals/orl/XuXR11}. Then, according to $\T_{st}$, two consistent sets of tours $\T_s$ and $\T_t$ can be obtained as follows: initialize $\T_s\coloneqq\emptyset$ and $\T_t\coloneqq\emptyset$; then, for each tour $k_{i'}l_{i_{j}}l_{i_{j+1}}\dots l_{i_{j+j'}}\in\T_{st}$, update $\T_s\coloneqq\T_s\cup \{o_{i'}s_{i_{j}}s_{i_{j+1}}\dots s_{i_{j+j'}}\}$ and $\T_t\coloneqq\T_t\cup \{o_{i'}t_{i_{j}}t_{i_{j+1}}\dots t_{i_{j+j'}}\}$ (see details in Algorithm~\ref{algpart1}).

\begin{algorithm}[t]
\caption{Compute two consistent sets of tours}
\label{algpart1}
\small
\begin{algorithmic}[1]
\Require An instance $G=(V,E,w)$ of the \ac{mDaRP}.

\Ensure Two consistent sets of tours $\T_s$ in $G[X\cup K]$ and $\T_t$ in $G[Y\cup K]$.

\State Construct a complete graph $H=(V_H,E_H,\widetilde{w})$, where $V_H=V^1_H\cup V^2_H$, ${V^1_H}=\{k_1,\dots,k_m\}$, and ${V^2_H}=\{l_1,\dots,l_n\}$. Moreover, the weight function $\widetilde{w}$ is defined as in (\ref{thefunction}).

\State\label{2approx} Compute a $2$-approximate solution $\T_{st}$ to the \ac{mTSP} in $H$ by using the algorithm in~\citep{DBLP:journals/orl/XuXR11}, where the vertices in $V^1_H$ are regarded as depots.

\State Initialize $\T_s\coloneqq\emptyset$ and $\T_t\coloneqq\emptyset$.

\State For each tour $k_{i'}l_{i_{j}}l_{i_{j+1}}\dots l_{i_{j+j'}}\in\T_{st}$, update $\T_s\coloneqq\T_s\cup \{o_{i'}s_{i_{j}}s_{i_{j+1}}\dots s_{i_{j+j'}}\}$ and $\T_t\coloneqq\T_t\cup \{o_{i'}t_{i_{j}}t_{i_{j+1}}\dots t_{i_{j+j'}}\}$. \label{coins}

\State\Return $\T_s$ and $\T_t$.
\end{algorithmic}
\end{algorithm}

It is clear that Algorithm~\ref{algpart1} takes $\O(\rnum^2)$ time. 
We have the following properties.

\begin{lemma}\label{thetours1}
Algorithm~\ref{algpart1} computes, in $\O(\rnum^2)$ time, two consistent sets of tours, $\T_s$ and $\T_t$, that satisfy conditions (a)-(c) and ensure that $\widetilde{w}(\T_{st}) = w(\T_s) + w(\T_t)$.
\end{lemma}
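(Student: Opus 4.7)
The lemma decomposes into four straightforward verifications: the running time, consistency of $\T_s$ and $\T_t$, conditions (a)-(c), and the weight identity. None requires a deep idea; the whole statement is essentially a check that the construction does what it promises.

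For the running time, observe that by~(\ref{simpleass}) we have $\size{V_H}=\vehiclenum+\rnum=\O(\rnum)$, so building $H$ and tabulating $\widetilde w$ costs $\O(\rnum^2)$. The Xu--Xu--Rodrigues 2-approximation for mTSP in Step~\ref{2approx} runs in $\O(\size{V_H}^2)=\O(\rnum^2)$ time, and Step~\ref{coins} is linear in the total length of the tours in $\T_{st}$, which is $\O(\rnum)$. Consistency and the structural conditions then follow directly from Step~\ref{coins}: every tour $k_{i'}l_{i_1}\dots l_{i_p}\in\T_{st}$ generates one element $o_{i'}s_{i_1}\dots s_{i_p}\in\T_s$ and one element $o_{i'}t_{i_1}\dots t_{i_p}\in\T_t$, sharing the same index permutation and containing the same vehicle $o_{i'}$, so the two sets are consistent in the sense defined in the preliminaries. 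Because each $l_j\in V^2_H$ appears in exactly one tour of $\T_{st}$, each source $s_j$ appears in exactly one tour of $\T_s$ and each destination $t_j$ in exactly one tour of $\T_t$, giving (a). Since every depot $k_{i'}$ lies in exactly one mTSP tour and each mTSP tour contains exactly one depot, (b) and (c) follow.

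For the weight identity, I would argue tour by tour and sum. Fix a tour $\tau=k_{i'}l_{i_1}l_{i_2}\dots l_{i_p}\in\T_{st}$ and use the definition (\ref{thefunction}) of $\widetilde w$, in which every edge contributes a "source part" plus a "destination part":
\[
\widetilde w(\tau)=\sum_{j=0}^{p}\bigl[w(\cdot_{i_j},\cdot_{i_{j+1}})\text{ source-side}\bigr]+\sum_{j=0}^{p}\bigl[w(\cdot_{i_j},\cdot_{i_{j+1}})\text{ destination-side}\bigr],
\]
with the convention $i_0=i_{p+1}=i'$ and $s_{i'}=t_{i'}=o_{i'}$. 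The two sums are exactly the weights of the corresponding tours $o_{i'}s_{i_1}\dots s_{i_p}$ and $o_{i'}t_{i_1}\dots t_{i_p}$ produced in Step~\ref{coins}. Summing over all $\tau\in\T_{st}$ yields $\widetilde w(\T_{st})=w(\T_s)+w(\T_t)$, as required. A singleton depot tour contributes $0$ on both sides, and the one degenerate case worth flagging is $p=1$, where the mTSP ``cycle'' traverses the edge $k_{i'}l_{i_1}$ twice and the factor $2$ in $\widetilde w(k_{i'},l_{i_1})$ matches the fact that the induced tours $o_{i'}s_{i_1}$ and $o_{i'}t_{i_1}$ also traverse their unique edge twice. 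I do not anticipate any real obstacle; the only care needed is the bookkeeping for these degenerate tours.
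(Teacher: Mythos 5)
Your proof is correct and follows essentially the same route as the paper's: read off (a)-(c) from the mTSP feasibility conditions on $\T_{st}$, and verify $\widetilde{w}(\T_{st}) = w(\T_s) + w(\T_t)$ edge-by-edge from the definition of $\widetilde{w}$ in~(\ref{thefunction}). You are somewhat more explicit than the paper about the degenerate cases (singleton depot tours and two-vertex tours traversing an edge twice), which is fine; no gap.
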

\begin{proof}
Since $H$ is an instance of the \ac{mTSP} with depots given by the vertices in $V^1_H$, we know by the objective of the \ac{mTSP} that (1) each vertex in $V_H$ is contained in exactly one tour in $\T_{st}$, and (2) each tour in $\T_{st}$ contains exactly one depot in $V^1_H$.
Therefore, by line~\ref{coins}, conditions (a)-(c) are all satisfied.

Moreover, by (\ref{thefunction}), for any tour $T_{st}=k_{i'}l_{i_{j}}l_{i_{j+1}}\dots l_{i_{j+j'}}\in\T_{st}$, we can easily obtain $\widetilde{w}(T_{st})=w(T_s)+w(T_t)$, where $T_s=o_{i'}s_{i_{j}}s_{i_{j+1}}\dots s_{i_{j+j'}}\in \T_s$ and $T_t=o_{i'}t_{i_{j}}t_{i_{j+1}}\dots t_{i_{j+j'}}\in\T_t$. Thus, $\widetilde{w}(\T_{st}) = w(\T_s) + w(\T_t)$.
\end{proof}

Although the graph $H$ is simple, we show through novel analysis that the consistent sets of tours $\T_s$ and $\T_t$ in Lemma~\ref{thetours1} satisfy $w(\T_s)+w(\T_t)\leq \O(\sqrt{\lambda}\log\rnum)\cdot\OPT$.
We recall the following well-known result.

\begin{lemma}[\citet{erdos1935combinatorial}]\label{erdosres}
Every permutation on $\{1,2,..,r\}$ has an increasing or decreasing subsequence of length at least $\Ceil{\sqrt{r}}$.
\end{lemma}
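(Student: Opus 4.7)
The plan is to use the classical pigeonhole argument (due to Seidenberg) that proves the Erd\H{o}s--Szekeres theorem. Write $k := \lceil\sqrt{r}\rceil$ and denote the permutation by $a_1 a_2 \dots a_r$. I would assign to each position $i \in [r]$ an ordered pair $(x_i, y_i)$, where $x_i$ is the length of the longest increasing subsequence ending at $a_i$ and $y_i$ is the length of the longest decreasing subsequence ending at $a_i$. Both $x_i$ and $y_i$ are positive integers, so each pair lies in $\mathbb{N}^+ \times \mathbb{N}^+$.

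The first key step is to establish that the map $i \mapsto (x_i, y_i)$ is injective. I would verify this by picking $i < j$ and splitting into cases on the relative order of $a_i$ and $a_j$ (note that equality cannot occur, since $\pi$ is a permutation). If $a_i < a_j$, then appending $a_j$ to any longest increasing subsequence ending at $a_i$ yields $x_j \geq x_i + 1$; if $a_i > a_j$, then the symmetric argument yields $y_j \geq y_i + 1$. In either case $(x_i, y_i) \neq (x_j, y_j)$.

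The second step is a pigeonhole contradiction. Assume for contradiction that every increasing and every decreasing subsequence has length at most $k - 1$. Then every coordinate of $(x_i, y_i)$ lies in $\{1, \dots, k-1\}$, so there are at most $(k-1)^2$ possible pairs. By the injectivity established above, $r \leq (k-1)^2$. On the other hand, $k = \lceil\sqrt{r}\rceil$ gives $k - 1 < \sqrt{r}$ and hence $(k-1)^2 < r$, a contradiction. Thus some monotone subsequence of length at least $k$ must exist.

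I do not foresee a real obstacle here; the only mildly subtle point is the strict inequality $(k-1)^2 < r$ in the case when $r$ is a perfect square, but even then $k = \sqrt{r}$ gives $(k-1)^2 = r - 2\sqrt{r} + 1 < r$ for all $r \geq 1$, so the argument goes through uniformly.
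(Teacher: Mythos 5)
Your proof is correct: it is the classical Seidenberg pigeonhole argument for the Erd\H{o}s--Szekeres theorem, and all the steps (injectivity of the map $i\mapsto(x_i,y_i)$ and the counting $(k-1)^2<r$ with $k=\lceil\sqrt{r}\rceil$) check out, including the perfect-square edge case you flagged. Note, however, that the paper does not prove this lemma at all; it simply cites it as a known result from Erd\H{o}s and Szekeres (1935), so there is no ``paper's proof'' to compare against.
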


Based on the above lemma, we first prove the following structural property.

\begin{lemma}\label{structure}
Given any two metric spaces $M_s=(V_s=\{s_1,\dots,s_r\}, w_s)$ and $M_t=(V_t=\{t_1,\dots,t_r\}, w_t)$, there exists a \ac{TSP} tour $T$ in $M_{st}\coloneqq (V_{st}=\{1,\dots,r\}, l)$, where $l(i,j)\coloneqq w_s(s_i,s_j)+w_t(t_i,t_j)$, such that $l(T)\leq 2\sqrt{r-1} (w_s(T^*_s)+w_t(T^*_t))$ where $T^*_s$ (resp., $T^*_t$) is an optimal \ac{TSP} tour in $M_s$ (resp., $M_t$).
\end{lemma}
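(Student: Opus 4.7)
My plan is to apply Erdős-Szekeres to extract common monotone ordering structure from the two tours $T^*_s$ and $T^*_t$, and then construct $T$ as a concatenation of a few "nicely ordered" pieces, within each of which the traversal respects the order of both tours up to triangle-inequality slack.

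First, I would reindex so that $T^*_s$ visits $s_1, s_2, \ldots, s_r$ in cyclic order, and let $\pi$ be the permutation such that $T^*_t$ visits $t_{\pi(1)}, t_{\pi(2)}, \ldots, t_{\pi(r)}$ in cyclic order. Using Lemma~\ref{erdosres} together with the Mirsky--Dilworth theorem on partitioning permutations into monotone subsequences, I would partition $\{1,\ldots,r\}$ into $K\le\lceil\sqrt{r-1}\rceil$ subsets $S_1,\ldots,S_K$ such that each $S_i$, listed in increasing index order, is a monotone subsequence of $\pi$ (either increasing or decreasing throughout). The tour $T$ is then the cyclic concatenation $S_1\circ S_2\circ\cdots\circ S_K$, where within each $S_i$ the vertices are visited in increasing index order.

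To bound $w_s(T)$: inside each $S_i$ the traversal follows a sub-sequence of $T^*_s$'s cyclic order, so by the triangle inequality along the $T^*_s$-arc from $\min(S_i)$ to $\max(S_i)$, the within-segment $w_s$-cost is at most $w_s(T^*_s)$; each of the $K$ between-segment transition edges is also bounded by $w_s(T^*_s)$ via the triangle inequality; summing yields $w_s(T)\le 2K\cdot w_s(T^*_s)$. By the symmetric argument, because each $S_i$ is $\pi$-monotone, the induced order on the $t$-vertices is a sub-sequence of $T^*_t$ or of its reverse, giving $w_t(T)\le 2K\cdot w_t(T^*_t)$. Adding,
\[
l(T)\;\le\;2K\bigl(w_s(T^*_s)+w_t(T^*_t)\bigr)\;\le\;2\sqrt{r-1}\bigl(w_s(T^*_s)+w_t(T^*_t)\bigr).
\]

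The main obstacle is pinning down $K\le\lceil\sqrt{r-1}\rceil$ precisely. A naive iterative application of Lemma~\ref{erdosres} already yields a partition of size $\O(\sqrt{r})$; the sharper bound requires the precise form of Erdős-Szekeres (every permutation of length $(a-1)^2+1$ contains a monotone subsequence of length at least $a$) combined with Mirsky--Dilworth to convert longest-monotone-subsequence bounds into partition-size bounds, plus careful boundary handling. A secondary subtle point is verifying that summing the within-segment $T^*_s$-arcs across the $K$ pieces never over-counts any single $T^*_s$-edge by more than a factor $K$; this follows because within a single $S_i$, each $T^*_s$-edge can be straddled by at most one consecutive index-pair, so the multiplicities add up to at most $K$ across the partition.
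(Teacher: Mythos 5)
Your overall plan---extract monotone pieces of the permutation relating $T^*_s$ and $T^*_t$ and concatenate them---is the same idea the paper uses, but the partition bound you invoke is not delivered by Dilworth or Mirsky, and this is where the proof breaks. Dilworth's theorem gives a partition into \emph{purely increasing} subsequences whose minimum cardinality equals the length $d$ of the longest decreasing subsequence (symmetrically, $i$ pieces if all are decreasing). Erd\H{o}s--Szekeres only guarantees $\max(i,d)\ge\sqrt{r-1}$; it says nothing useful about $\min(i,d)$, which can be as large as $\Theta(r)$. Partitioning into a \emph{mix} of increasing and decreasing pieces is a genuinely different combinatorial question, not ``careful boundary handling'': what iterative stripping via Lemma~\ref{erdosres} provably yields is about $2\sqrt{r-1}$ pieces (exactly what the paper computes), and the sharper bound $K\le\lceil\sqrt{r-1}\rceil$ that your cost analysis needs is not a known consequence of these tools---indeed the extremal value of $K$ over all permutations of length $r-1$ is believed to be $\Theta(\sqrt{2r})$, not $\sqrt{r}$. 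So the central claim is unsubstantiated and your final inequality does not follow.

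The paper avoids both difficulties. First, it strips monotone pieces greedily and bounds the number of iterations $i^*$ via the telescoping estimate $\sqrt{f_{i+1}}-\sqrt{f_i}<-\tfrac{1}{2}$, yielding $i^*\le 2\sqrt{r-1}$ rigorously. Second, it charges each piece only one $w_s(T^*_s)$ instead of your two (one for the within-piece arc, one for the between-piece transition): by fixing $p_1=1$, inserting the base vertex $s_1$ between consecutive pieces, and observing that each $1\circ\widetilde{T}_i$ is a shortcut sub-tour of $T^*_s$, the whole concatenation is a shortcut of $i^*$ such sub-tours meeting at $s_1$, giving $w_s(T_s)\le i^*\cdot w_s(T^*_s)$ rather than $2i^*\cdot w_s(T^*_s)$. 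With this finer accounting, the greedy $i^*\le 2\sqrt{r-1}$ is exactly enough. To repair your argument you would need to adopt both ingredients: settle for $K\le 2\sqrt{r-1}$ from the greedy recurrence, and use the base-vertex trick to avoid double-charging the transitions.
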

\begin{proof}
Suppose $T^*_s={s_1s_2\dots s_r}$ and $T^*_t={t_{p_1}t_{p_2}\dots t_{p_r}}$, where $p_1p_2\dots p_r$ is a permutation on $\{1,\dots,r\}$. 
Since $T^*_t$ is a \ac{TSP} tour in $M_t$, we assume w.l.o.g.\ that $p_1=1$.
To prove the lemma, it suffices to prove that there exists a \ac{TSP} tour $T={q_1q_2\dots q_r}$ in $M_{st}$, where $q_1q_2\dots q_r$ is a permutation on $\{1,\dots,r\}$, such that $l(T)=$ $w_s( T_s)+w_t( T_t)\leq 2\sqrt{r-1} (w_s(T^*_s)+w_t(T^*_t))$, where $T_s={s_{q_1}s_{q_2}\dots s_{q_r}}$ and $T_t={t_{q_1}t_{q_2}\dots t_{q_r}}$.

We propose an algorithm to construct such a permutation.
The main idea is as follows. 

The \ac{TSP} tours $T^*_s$ and $T^*_t$ correspond to the permutations $1\dots r$ and $p_1\dots p_r$, respectively.
By Lemma~\ref{erdos}, the sequence $p_2\dots p_r$ contains either an increasing or decreasing subsequence of length at least $\Ceil{\sqrt{r-1}}$. 
By traveling along $T^*_s$ and $T^*_t$ once each, we can extract two consistent tours $T'_s$ and $T'_t$ corresponding to this subsequence, and by the triangle inequality, 
$
w_s(T'_s)+w_t(T'_t)\leq w_s(T^*_s)+w_t(T^*_t).
$
Then, we remove the elements of this subsequence from $p_2 \dots p_r$ and repeat the process on the remaining sequence, obtaining another pair of consistent tours. By iterating this procedure and combining all such tours, we eventually obtain two consistent \ac{TSP} tours corresponding to the same permutation.
The details are given in Algorithm~\ref{algo:conincident}.

\begin{algorithm}[t]
\caption{Construct a good permutation}
\label{algo:conincident}
\small

\begin{algorithmic}[1]
\Require Optimal \ac{TSP} tours $T^*_s=s_1s_2\dots s_r$ and $T^*_t=t_{p_1}t_{p_2}\dots t_{p_r}$ in the metric spaces $M_s$ and $M_t$, where $p_1=1$.

\Ensure A permutation $q_1q_2\dots q_r$ on $\{1,\dots,r\}$.

\State Initialize $i\coloneqq 1$, $T^s_1\coloneqq 23\dots r$, $T^t_1\coloneqq p_2p_3\dots p_r$, and $T_1=1$.

\While{$f_i\coloneqq r-\size{ T_i}\geq 1$}\label{startloop}

\State\label{erdos} Find an increasing or decreasing subsequence $\widetilde{T}_i$ of length at least $\Size{\widetilde{T}_i}\geq \Ceil{\sqrt{f_i}}$ in $T^t_i$.

\State\label{cons} Update $T^s_{i+1}\coloneqq T^s_i\bullet\widetilde{T}_i$,
$T^t_{i+1}\coloneqq T^t_i\bullet \widetilde{T}_i$,
$T_{i+1}= T_i\circ \widetilde{T}_i$, and $i\coloneqq i+1$.

\EndWhile\label{endloop}

\State \Return $T_{i^*+1}$, where $i^*$ denotes the number of iterations.

\end{algorithmic}
\end{algorithm}

Note that $f_i=r-\size{ T_i}$ and $i^*$ denotes the number of iterations in lines~\ref{startloop}-\ref{endloop}. We have $f_1=r-1$, $f_{i^*+1}=0$, and $f_i\in \mathbb{N}$ for all $i\in[i^*+1]$. 
Moreover, by line~\ref{erdos}, for each $i\in[i^*]$, we have
\begin{equation}\label{ditui}
f_{i+1}\leq f_i-\Ceil{\sqrt{f_i}}<f_i.
\end{equation}

Next, we give an upper bound on $i^*$.
For any $i\in[i^*]$, we have
$\sqrt{f_{i+1}}-\sqrt{f_{i}}\leq \frac{-\Ceil{\sqrt{f_{i}}}}{\sqrt{f_{i+1}}+\sqrt{f_{i}}}\leq \frac{-\sqrt{f_{i}}}{\sqrt{f_{i+1}}+\sqrt{f_{i}}}<-\frac{1}{2}$, where the first inequality follows from $f_{i+1}-f_i\leq -\Ceil{\sqrt{f_i}}$ by (\ref{ditui}), the second from $\Ceil{\sqrt{f_{i}}}\geq \sqrt{f_{i}}$, and the last from $f_{i+1}<f_i$ by (\ref{ditui}).
Summing over $i\in[i^*]$, we obtain
$\sqrt{f_{i^*+1}}-\sqrt{f_1}=\sum_{i=1}^{i^*}\lrA{\sqrt{f_{i+1}}-\sqrt{f_{i}}}<-\frac{i^*}{2}$.
Since $f_1=r-1$ and $f_{i^*+1}=0$, we obtain
$i^*\leq 2\sqrt{f_1}=2\sqrt{r-1}$.

Then, we give an upper bound on $l( T_{i^*+1})$.
By line~\ref{cons}, $T_{i^*+1}=1\circ \widetilde{T}_1\circ \widetilde{T}_2\circ\cdots\circ  \widetilde{T}_{i^*}$.
Let $T_s$ and $T_t$ denote the \ac{TSP} tours in $M_s$ and $M_t$ corresponding to $T_{i^*+1}$, respectively. Note that $T_s$ and $T_t$ are consistent, and we have $l(T_{i^*+1})=w_s(T_s)+w_t(T_t)$. We may only analyze $w_s(T_s)$.
For each $j\in[i^*]$, let $\size{\widetilde{T}_j}=r_j$ and $\widetilde{T}_j=\widetilde{p}_{j_1}\dots \widetilde{p}_{j_{r_j}}$. Then, we have
$T_s=s_1\dots s_{\widetilde{p}_{i_1}}\dots s_{\widetilde{p}_{i_{r_i}}}\dots s_{\widetilde{p}_{i^*_1}}\dots s_{\widetilde{p}_{i^*_{r_{^*}}}}$.
By the triangle inequality, the weight $w_s$ of the tour $T_s$ corresponding to the permutation $T_{i^*+1}$ is at most the weight $w_s$ of the tour $T'_s$ corresponding to the sequence $T'_{i^*+1}=1\circ \widetilde{T}_1\circ1\circ \widetilde{T}_2\circ\cdots\circ1\circ  \widetilde{T}_{i^*}$, i.e., $w_s( T_s)\leq w_s(T'_s)$.
Note that $T'_s$ consists of $i^*$ tours, which correspond to the sequences $1\circ \widetilde{T}_1,\dots,1\circ \widetilde{T}_{i^*}$, respectively.
Since $\widetilde{T}_i$ is a subsequence of the permutation $2\dots r$ or $r\dots 2$, the weight $w_s$ of the tour corresponding to the sequence $1\circ \widetilde{T}_i$ is at most $w_s(T^*_s)$ by the triangle inequality.
Since $T'_s$ is a concatenation of $i^*$ such tours intersecting at $s_1$, we have
$w_s( T_s)\leq w_s(T'_s)\leq i^*\cdot w_s(T^*_s)$ by the triangle inequality.
Similarly, for $w_t(T_t)$, we have $w_t( T_t)\leq i^*\cdot w_t(T^*_t)$.

Therefore, we obtain $l( T_{i^*+1})=w_s( T_s)+w_t( T_t)\leq 2\sqrt{r-1} (w_s(T^*_s)+w_t(T^*_t))$.
\end{proof}

We next recall another structural property of the \ac{DaRP}.

\begin{lemma}[\citet{DBLP:journals/talg/GuptaHNR10}]\label{structure2}
Given any instance of the \ac{DaRP}, there exists a feasible walk $W$ satisfying the following conditions: (1) $W$ can be partitioned into a set of segments $\{W_1,W_2,\dots,W_N\}$, where in each $W_i$ the vehicle serves a subset $\R_i\subseteq \R$ of at most $\lambda$ requests such that $W_i$ is a path that first picks up each; (2) the weight of $W$ is at most $\O(\log\rnum)$ times that of an optimal solution, i.e., $w(W)\leq \O(\log\rnum)\cdot\OPT$.
\end{lemma}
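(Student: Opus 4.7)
My plan is to combine a probabilistic tree embedding with a batching and rearrangement argument on tree metrics.

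First, I would invoke the tree embedding of Fakcharoenphol--Rao--Talwar to randomly embed the metric $(V,w)$ into a dominating tree metric $(V,w_T)$ satisfying $w_T(u,v)\geq w(u,v)$ pointwise and $\EE{w_T(u,v)}\leq \O(\log\vertexnum)\cdot w(u,v)$ for all $u,v\in V$. Since walks are purely combinatorial, any feasible \ac{DaRP} walk under $w$ remains feasible under $w_T$, so the optimal \ac{DaRP} cost $\OPT_T$ on the tree satisfies $\EE{\OPT_T}\leq \O(\log\vertexnum)\cdot\OPT = \O(\log\rnum)\cdot\OPT$, using $\vertexnum=\O(\rnum)$ from~(\ref{simpleass}). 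It therefore suffices to exhibit, on the tree metric, a feasible walk $\widehat{W}$ meeting condition~(1) with $w_T(\widehat{W})\leq \O(1)\cdot\OPT_T$; an averaging/conditioning argument over the random embedding then converts the expectation into the claimed existential bound on the original metric.

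Second, on the tree I would construct $\widehat{W}$ as follows. Let $W^*$ be an optimal tree walk, and scan it in the order in which requests are first picked up. Greedily group these into batches $\R_1,\R_2,\ldots,\R_N\subseteq\R$ of at most $\lambda$ requests each. For each batch $\R_i$, let $\tau_i$ denote the minimal Steiner subtree spanning the sources and destinations of $\R_i$ together with the vehicle's position entering segment $i$. I would then replace the portion of $W^*$ corresponding to $\R_i$ by a segment $W_i$ consisting of two Euler-style traversals of $\tau_i$: the first visits every source of $\R_i$ (so all pickups precede any delivery), and the second visits every destination. Since each Euler traversal of $\tau_i$ costs at most $2w_T(\tau_i)$, we get $w_T(W_i)\leq 4w_T(\tau_i)$, and condition~(1) holds by construction. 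Concatenating the $W_i$ and inserting short connecting paths along the tree gives $\widehat{W}$.

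The main obstacle is bounding $\sum_i w_T(\tau_i)$ by $\O(1)\cdot w_T(W^*)=\O(1)\cdot\OPT_T$. The difficulty is that a request in $\R_i$ may be picked up in one ``time interval'' of $W^*$ but delivered much later, so the time intervals of the batches along $W^*$ overlap and a direct edge-by-edge charging risks double counting. I would address this by arguing edge by edge on the tree: each edge $e$ of $\tau_i$ separates some sources of $\R_i$ from some of its destinations, and $W^*$ must cross $e$ at least once while such a request is outstanding; bounding how many distinct batches can simultaneously ``own'' a given edge (and amortizing via the greedy ordering of the batches) yields the desired $\sum_i w_T(\tau_i)\leq \O(w_T(W^*))$. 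Combined with the embedding step, this produces a walk $W$ with $w(W)\leq w_T(\widehat W)\leq \O(\log\rnum)\cdot\OPT$, as required.
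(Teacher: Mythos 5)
The paper does not prove this lemma; it is cited verbatim from Gupta et al.\ (2010). The argument there is \emph{not} a tree-embedding argument: it works directly on the optimal walk $W^*$ in the original metric. Roughly, one records for each request its pickup/delivery interval along $W^*$, observes that at most $\lambda$ intervals overlap any point (capacity), builds a balanced binary decomposition of $W^*$ (by length or by breakpoints) of depth $O(\log \rnum)$, assigns each request to the node of the decomposition whose split point its interval straddles, and obtains at most $\lambda$ requests per node whose endpoints all lie within that node's sub-walk; traversing each node's sub-walk twice (once picking up, once delivering) costs $O(1)$ times that node's length, and summing per level gives $O(\log\rnum)\cdot\OPT$. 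No tree embedding is involved, which is also why the bound is stated in terms of $\rnum$ rather than $\vertexnum$.

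Your proposal is a genuinely different route, and its crux is left unproven. You reduce to a tree metric via FRT (that reduction step itself is handled correctly: dominating tree, expectation over the random embedding, shortcutting Steiner vertices back), and then assert that on the tree one can batch so that $\sum_i w_T(\tau_i)\leq \O(1)\cdot\OPT_T$. You acknowledge this is the ``main obstacle'' but the sketch does not close it. Two concrete problems: \emph{(i)} the charging rule ``each edge $e$ of $\tau_i$ separates some sources of $\R_i$ from some of its destinations, and $W^*$ must cross $e$ while such a request is outstanding'' is not true as stated --- an edge $e$ belongs to $\tau_i$ whenever it lies on the path from the vehicle's current position to \emph{any} endpoint of the batch, including the case where all sources and all destinations of $\R_i$ are on the same side of $e$; such edges are not covered by the proposed charge. \emph{(ii)} Greedy batching by pickup order does not obviously bound the number of Steiner trees $\tau_i$ containing a fixed edge $e$ by $O(n_e)$, where $n_e$ is the number of times $W^*$ crosses $e$: requests that are picked up during a single long sojourn of $W^*$ on one side of $e$ can be spread across many consecutive batches, and where each segment $W_i$ ends (and hence where $\tau_{i+1}$ must connect back to) is not specified, so the ``amortization via the greedy ordering'' is not an argument yet. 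Indeed the natural recursive decomposition sketched above already gives $O(\log\rnum)$ on a tree (a tree walk is still a walk), and it is far from clear that one can improve this to $O(1)$ on trees; if one cannot, the FRT step composes to $O(\log\vertexnum\cdot\log\rnum)$ rather than the claimed $O(\log\rnum)$. In short, the FRT reduction is sound, but the $O(1)$-on-trees lemma it hinges on is neither proved nor obviously true, whereas the cited proof dispenses with trees entirely.
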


Lemma~\ref{structure2} guarantees the existence of a solution satisfying the \emph{batch delivery} property whose total weight is at most $\O(\log\rnum)\cdot \OPT$, where this property means the vehicle serves requests in batches, each consisting of picking up at most $\lambda$ requests first, followed by delivering all of them before proceeding to the next batch. 
As shown in~\citep{DBLP:journals/pvldb/LuoFDG23}, this structural property extends trivially to the \ac{mDaRP}. 

Next, we use Lemmas~\ref{structure} and \ref{structure2} to analyze the quality of the consistent sets of tours $\T_s$ and $\T_t$ in Algorithm~\ref{algpart1}.

\begin{lemma}\label{thetours2}
The consistent sets of tours $\T_s$ and $\T_t$ in Algorithm~\ref{algpart1} satisfy $w(\T_s)+w(\T_t)\leq \O(\sqrt{\lambda}\log\rnum)\cdot\OPT$.
\end{lemma}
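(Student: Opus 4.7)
Since the 2-approximation for the \ac{mTSP} from~\citep{DBLP:journals/orl/XuXR11} yields $\widetilde{w}(\T_{st})\le 2\,\OPT_H$, where $\OPT_H$ denotes the optimum mTSP weight in $H$, and Lemma~\ref{thetours1} gives $w(\T_s)+w(\T_t)=\widetilde{w}(\T_{st})$, it suffices to prove $\OPT_H\le \O(\sqrt{\lambda}\log\rnum)\cdot\OPT$. The plan is to exhibit an explicit feasible mTSP solution in $H$ of total $\widetilde{w}$-weight $\O(\sqrt{\lambda})\cdot w(W^*)$, where $W^*$ is the batch-delivery walk produced by Lemma~\ref{structure2} (transparently extended to the \ac{mDaRP}), which already satisfies $w(W^*)\le\O(\log\rnum)\cdot\OPT$.

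\textbf{Per-batch tours in $H$.} Fix vehicle $i$ and decompose its sub-walk $W^*_i$ into batches $B^i_1,\dots,B^i_{N_i}$, each containing $r_{ij}\le\lambda$ requests and consisting of a source-path of weight $P^s_{ij}$ followed by a destination-path of weight $P^t_{ij}$. Closing each monotone path into a tour via the triangle inequality bounds the optimal source TSP tour on the batch's sources by $2P^s_{ij}$ and the optimal destination TSP tour by $2P^t_{ij}$. Feeding these two metrics into Lemma~\ref{structure} then produces a consistent TSP tour $\tau_{ij}$ on the $r_{ij}$ request-vertices of $B^i_j$ in $V^2_H$ of $\widetilde{w}$-weight at most $2\sqrt{r_{ij}-1}\,(2P^s_{ij}+2P^t_{ij})\le 4\sqrt{\lambda}(P^s_{ij}+P^t_{ij})$.

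\textbf{Assembling a vehicle's tour; main obstacle.} Let $R_i$ be the requests served by vehicle $i$ in $W^*$. To build a single mTSP tour $\tau_i$ on $\{k_i\}\cup\{l_j:j\in R_i\}$, choose the anchor $l^*_{ij}$ of batch $B^i_j$ to be the request whose source is picked up first inside the batch, and traverse $k_i\to\tau_{i1}\to l^*_{i2}\to\tau_{i2}\to\cdots\to\tau_{iN_i}\to k_i$, opening and closing each $\tau_{ij}$ at $l^*_{ij}$. The per-batch part sums to at most $4\sqrt{\lambda}\sum_j(P^s_{ij}+P^t_{ij})\le 4\sqrt{\lambda}\,w(W^*_i)$. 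Each glue edge unfolds to $\widetilde{w}(l^*_{ij},l^*_{i,j+1})=w(s^*_{ij},s^*_{i,j+1})+w(t^*_{ij},t^*_{i,j+1})$, and by the triangle inequality each summand is at most the weight of the sub-walk of $W^*_i$ connecting its endpoints. The delicate step---and the main obstacle---is that with a generic anchor the two chains of sub-walks may overlap across different $j$, inflating the bound by a factor of $N_i$. Using first-picked-source anchors simultaneously orders the source-anchors $s^*_{i1},\dots,s^*_{iN_i}$ and the destination-anchors $t^*_{i1},\dots,t^*_{iN_i}$ with the temporal order of $W^*_i$ (the latter because $t^*_{ij}$ lies inside the destination-part of $B^i_j$, which entirely precedes $B^i_{j+1}$), so both chains of sub-walks are pairwise disjoint and each sums to at most $w(W^*_i)$. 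The two terminal jumps $\widetilde{w}(k_i,l^*_{i1})$ and $\widetilde{w}(l^*_{iN_i},k_i)$ are each at most $2w(W^*_i)$ via the triangle inequality through $o_i$. Hence $\widetilde{w}(\tau_i)\le \O(\sqrt{\lambda})\cdot w(W^*_i)$, and summing over $i$ gives $\OPT_H\le \O(\sqrt{\lambda})\cdot w(W^*)\le \O(\sqrt{\lambda}\log\rnum)\cdot\OPT$, which finishes the proof.
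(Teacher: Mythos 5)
Your proof is correct and follows essentially the same approach as the paper: start from the batch-delivery walk of Lemma~\ref{structure2}, apply Lemma~\ref{structure} per batch to obtain consistent source/destination tours, then assemble them into a feasible \ac{mTSP} solution in $H$ whose $\widetilde{w}$-weight is $\O(\sqrt{\lambda})\cdot w(W^*)$, and finish with the $2$-approximation guarantee of line~\ref{2approx}. The only difference is cosmetic: you glue the per-batch tours $\tau_{ij}$ directly in $H$ and account for the glue edges explicitly, whereas the paper first forms the concatenated tours $T^W_s, T^W_t$ in $G$, bounds them by $\sum_i w(\widetilde T^s_i)+w(W_s)$ using the auxiliary walk $W_s=o_W s_{1_1}\dots s_{N_1}$ (and symmetrically for $W_t$), and then transfers to $H$ via Lemma~\ref{thetours1}. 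One small remark: the ``main obstacle'' you flag is not actually an obstacle --- for \emph{any} choice of anchor request inside batch $j$, the $s$-anchors lie in consecutive pickup phases and the $t$-anchors lie in consecutive delivery phases, so the two chains of sub-walks are automatically disjoint without needing the first-picked-source convention; that choice is convenient but not essential.
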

\begin{proof}
Let $\W$ be a solution to the \ac{mDaRP} in $G$ with weight at most $\O(\log\rnum)\cdot \OPT$, where each walk in $\W$ satisfies the batch delivery property from Lemma~\ref{structure2}.
By shortcutting, we assume w.l.o.g.\ that (1) each request is served by exactly one walk in $\W$, (2) each vehicle appears in exactly one walk, and each walk contains exactly one vehicle in $K$.

Next, we construct from $\W$ a set of \ac{mTSP} tours $\T^W$ in the graph $H$ such that $\widetilde{w}(\T^W)\leq \O(\sqrt{\lambda}\log\rnum)\cdot\OPT$. 
Since Algorithm~\ref{algpart1} computes a 2-approximate \ac{mTSP} solution $\T_{st}$ in $H$ (see line~\ref{2approx}), and Lemma~\ref{thetours1} guarantees that $\widetilde{w}(\T_{st})=w(\T_s)+w(\T_t)$, it follows that $w(\T_s)+w(\T_t)\leq 2\widetilde{w}(\T^W)\leq \O(\sqrt{\lambda}\log\rnum)\cdot\OPT$.

For any walk $W\in\W$, it begins at a vehicle location, saying $o_W\in K$. By the batch delivery property, we can express the walk as $W=o_W\circ W_1\circ W_2\circ \dots W_N$, where in each walk $W_i$ the vehicle picks up at most $\lambda$ requests first, followed by delivering all of them. 
Therefore, each walk $W_i$ can be further partitioned into two walks $W^1_i$ and $W^2_i$, where the vehicle picks up the requests in $ W^1_i$ and delivers them in $ W^2_i$.

Then, for each walk $W_i$, we construct two simple tours $T^s_i$ and $T^t_i$ by doubling and shortcutting $W^1_i$ and $W^2_i$, respectively. By the triangle inequality, we have $w(T^s_i)+w(T^t_i)\leq 2w(W^1_i)+2w(W^2_i)\leq 2w(W_i)$.
By Lemma~\ref{structure}, there exists a pair of consistent tours $\widetilde{T}^s_i$ and $\widetilde{T}^t_i$ such that $w(\widetilde{T}^s_i)+w(\widetilde{T}^t_i)\leq \sqrt{\lambda-1}(w(T^s_i)+w(T^t_i))\leq 2\sqrt{\lambda-1}\cdot w(W_i)$, where $\widetilde{T}^s_i$ (resp., $\widetilde{T}^t_i$) consists of the sources (resp., destinations) of all served requests by $W_i$.
Therefore, there exists a pair of consistent sets of tours $\widetilde{\T}_s\coloneq\{\widetilde{T}^s_i\}_{i=1}^{N}$ and $\widetilde{\T}_t\coloneq\{\widetilde{T}^t_i\}_{i=1}^{N}$ such that 
{\begin{equation}\label{xe1}
w(\widetilde{\T}_s)+w(\widetilde{\T}_t)\leq \sum_{i=1}^{N}2\sqrt{\lambda-1}w(W_i)=2\sqrt{\lambda-1}\cdot w(W).
\end{equation}}

Now, consider the tours 
$T^W_s=o_W\circ \widetilde{T}^s_1\circ \widetilde{T}^s_2\circ\cdots \widetilde{T}^s_N$ and
$T^W_t=o_W\circ \widetilde{T}^t_1\circ \widetilde{T}^t_2\circ\cdots \widetilde{T}^t_N$.
By construction, $\widetilde{T}^s_i$ and $\widetilde{T}^t_i$ are consistent for each $i\in[N]$, and then $T^W_s$ and $T^W_t$ are consistent. Moreover, $T^W_s$ (resp., $T^W_t$) covers all sources (resp., destinations) of requests served by $W$.

Assume $\widetilde{T}^s_i=s_{i_1}\dots $ for each $i\in[N]$.
By the construction of $T^W_s$ and the triangle inequality, we know that $w(T^W_s)\leq \sum_{i=1}^{N}w(\widetilde{T}^s_i)+w(W_s)$, where $W_s$ denotes the walk $o_Ws_{1_1}s_{2_1}\dots s_{N_1}$.
Since $W_s$ can be obtained from $W$ by shortcutting so that only the vertices in $\{o_W,s_{1_1},s_{2_1},\dots,s_{N_1}\}$ remain, we have $w(W_s)\leq w(W)$ by the triangle inequality.
Then, we have
{\begin{equation}\label{xe4}
w(T^W_s)\leq \sum_{i=1}^{N}w(\widetilde{T}^s_i)+w(W_s)=w(\widetilde{\T}_s)+w(W),
\end{equation}}
where the equality follows from $\widetilde{\T}_s=\{\widetilde{T}^s_i\}_{i=1}^{N}$ by definition. 
Similarly, we have $w(T^W_t)\leq w(\widetilde{\T}_t)+w(W)$.
Then, by (\ref{xe1}) and (\ref{xe4}), we have
{\begin{equation}\label{xe6}
w(T^W_s)+w(T^W_t)\leq 2\lrA{\sqrt{\lambda-1}+1}w(W).
\end{equation}}

According to the tours $T^W_s$ and $T^W_t$ in the graph $G$, let $T^W_{st}$ denote the corresponding tour in $H$. By Lemma~\ref{thetours1}, we have $\widetilde{w}(T^W_{st})=w(T^W_s)+w(T^W_t)$. 
Then, by the definitions of $\W$, $T^W_s$, and $T^W_t$, the set $\T^W\coloneqq\{T^W_{st}\}_{W\in\W}$ forms a feasible \ac{mTSP} solution in $H$. Moreover, by (\ref{xe6}), we have
{\begin{align*}
\widetilde{w}(\T^W)&=\sum_{W\in\W}\lrA{w(T^W_s)+w(T^W_t)}\leq 2\lrA{\sqrt{\lambda-1}+1}\sum_{W\in\W}w(W)= 2\lrA{\sqrt{\lambda-1}+1}w(\W)\leq \O(\sqrt{\lambda}\log\rnum)\cdot\OPT,
\end{align*}}
where the last inequality follows from $w(\W)\leq \O(\log\rnum)\cdot\OPT$.
\end{proof}

We remark that the upper bound $2\sqrt{r-1}$ in Lemma~\ref{structure} is tight up to a constant factor.
As an example, suppose $\sqrt{r}\in\mathbb{N}^+$, and let $V_s=\bigcup_{i=1}^{\sqrt{r}}V^i_s$ and $V_t=\bigcup_{i=1}^{\sqrt{r}}V^i_t$, where $V^i_s=\{s_{i_1},s_{i_2},\dots,s_{i_{\sqrt{r}}}\}$ and $V^i_t=\{t_{1_i},t_{2_i},\dots,t_{\sqrt{r}_{i}}\}$ for any $i\in[\sqrt{r}]$.
Moreover, the metric weight functions $w_s$ and $w_t$ are defined as follows: 
\begin{itemize}
    \item $w_s(s,s')=1$ if $s\in V^i_s$ and $s'\in V^j_s$ with $i\neq j$, and $w_s(s,s')=0$ otherwise;
    \item $w_t(t,t')=1$ if $t\in V^i_t$ and $t'\in V^j_t$ with $i\neq j$, and $w_t(t,t')=0$ otherwise.
\end{itemize}

On one hand, the optimal \ac{TSP} tours $T^*_s$ in $M_s$ and $T^*_t$ in $M_t$ satisfy that $w_s(T^*_s)=w_t(T^*_t)=\sqrt{r}$. 

On the other hand, for any two distinct pairs $(i,j)$ and $(i',j')$ with $i,i',j,j'\in[\sqrt{r}]$, we have $w_s(s_{i_j},s_{i'_{j'}})+w_t(t_{i_j},t_{i'_{j'}})=1$ because we have either $w_s(s_{i_j},s_{i'_{j'}})=1$ or $w_t(t_{i_j},t_{i'_{j'}})=1$.

Therefore, for any permutation $T=p_1p_2\dots p_r$ on $\{1,\dots,r\}$, the corresponding \ac{TSP} tours $T_s$ in $M_s$ and $T_t$ in $M_t$ satisfy that $l(T)=w_s( T_s)+w_t( T_t)=r$. 
Hence, there exists a lower bound of $\frac{r}{2\sqrt{r}}=\frac{\sqrt{r}}{2}$ for the ratio $\frac{l(T)}{w_s(T^*_s)+w_t(T^*_t)}$ in Lemma~\ref{structure}, and then our ratio $2\sqrt{r-1}$ is tight up to a constant factor.

\subsubsection{The Routes of the Vehicles}\label{part2}
In this subsection, we show how to obtain the routes of the vehicles.

Now, we have a pair of consistent sets of tours $\T_s$ and $\T_t$ satisfying the properties in Lemma~\ref{thetours1}. 
We may arbitrarily select a pair of consistent tours $T_s\in \T_s$ and $T_t\in\T_t$, and show how to obtain two consistent sets of fragments $\B_s$ and $\B_t$ by possibly partitioning $T_s$ and $T_t$ simultaneously, which will then be used to construct a route (walk) for the vehicle in $T_s$ and $T_t$.

Since $T_s$ and $T_t$ are consistent, we may let $T_s=o_{i'}s_{1}s_{2}\dots s_{i}$ and $T_t=o_{i'}t_{1}t_{2}\dots t_{i}$. We consider two cases.

\textbf{Case~1: $i\leq \lambda$.} In this case, no partitioning is needed.
We can directly construct a route $W_{o_{i'}}=T_s\circ (T_t\bullet o_{i'})=o_{i'}s_{1}s_{2}\dots s_{i}t_{1}t_{2}\dots t_{i}$ for the vehicle $o_{i'}$.

\textbf{Case~2: $i>\lambda$.} 
We choose an integer $\theta$ uniformly at random from $\{1,2,\dots,\lambda\}$.
We then partition $T_s$ and $T_t$ into two consistent sets of fragments as follows: $\B_s=\{B^1_s,B^2_s,\dots,B^{N_\theta}_s\}$ and $\B_t=\{B^1_t,B^2_t,\dots,B^{N_\theta}_t\}$, where $N_\theta=\Ceil{\frac{i-\theta}{\lambda}}+1$. Similar to the \ac{ITP} method~\citep{altinkemer1990heuristics}, the fragments in $\B_s$ are given by 
{\begin{equation}\label{thefragments}
\begin{cases}
&B^1_s=o_{i'}s_{1}s_{2}\dots s_{{\theta}},\\
&B^2_s=s_{{\theta+1}}s_{{\theta+2}}\dots s_{{\theta+\lambda}},\\
&B^3_s=s_{{\theta+\lambda+1}}s_{{\theta+\lambda+2}}\dots s_{{\theta+2\lambda}},\\
&\dots\\
&B^{N_\theta}_s=s_{{\theta+\lambda\cdot (N_\theta-2)+1}}s_{{\theta+\lambda\cdot (N_\theta-2)+2}}\dots s_{i}.
\end{cases}
\end{equation}}
Note that each $B^j_t$ in $\B_t$ is defined analogously, with $B^j_s$ consistent with $B^j_t$ for all $j\in[N_\theta]$. For instance, $B^1_t=o_{i'}t_{1}t_{2}\dots t_{{\theta}}$ and $B^2_t=t_{{\theta+1}}t_{{\theta+2}}\dots t_{{\theta+\lambda}}$.

Now, we have $N_\theta$ pairs of consistent fragments in $\B_s\cup\B_t$, each covering at most $\lambda$ requests.
These are transformed into $N_\theta$ walks $\{W^1_{o_{i'}},W^2_{o_{i'}},\dots,W^{N_\theta}_{o_{i'}}\}$, where $W^1_{o_{i'}}=B^1_s\circ (B^1_t\bullet o_{i'})$ and $W^j_{o_{i'}}=B^j_s\circ B^j_t$ for $j>1$. 

Finally, we concatenate them to form a feasible route $W_{o_{i'}}$ for the vehicle $o_{i'}$, where
{\begin{equation}\label{theroute}
W_{o_{i'}}=W^1_{o_{i'}}\circ W^2_{o_{i'}}\circ\cdots \circ W^{N_\theta}_{o_{i'}}.
\end{equation}}

The details are shown in Algorithm~\ref{algpart2}. We have the following properties.

\begin{algorithm}[t]
\caption{Compute the routes of the vehicles}
\label{algpart2}
\small

\begin{algorithmic}[1]
\Require A pair of consistent sets of tours $\T_s$ and $\T_t$ satisfying the properties in Lemma~\ref{thetours1}.

\Ensure A set of routes $\W$ for all vehicles in $K$.

\State Initialize $\W\coloneqq \emptyset$.

\For{each pair of consistent tours $T_s\in \T_s$ and $T_t\in \T_t$}\label{4lpair}
\Comment{Assume that $T_s=o_{i'}s_{1}s_{2}\dots s_{i}$ and $T_t=o_{i'}t_{1}t_{2}\dots t_{i}$.}

\State \textbf{Case~1: $i\leq\lambda$.}
Obtain a route $W_{o_{i'}}=T_s\circ (T_t\bullet o_{i'})$.\label{4ltour1} 

\State \textbf{Case~2: $i>\lambda$.}
Choose $\theta$ uniformly at random from $[\lambda]$, then obtain two consistent sets of fragments: $\B_s=\{B^1_s,B^2_s,\dots,B^{N_\theta}_s\}$ and $\B_t=\{B^1_t,B^2_t,\dots,B^{N_\theta}_t\}$, where $N_\theta=\Ceil{\frac{i-\theta}{\lambda}}+1$, and form a route $W_{o_{i'}}$ as defined in (\ref{theroute}).\label{4ltour2} 

\State Update $\W\coloneqq\W\cup\{W_{o_{i'}}\}$.
\EndFor

\State \Return $\W$.

\end{algorithmic}
\end{algorithm}

\begin{lemma}\label{theroutes1}
Algorithm~\ref{algpart2} computes in $\O(\rnum)$ time a set of feasible routes $\W$, {where exactly one route is assigned for each vehicle in $K$.} 
\end{lemma}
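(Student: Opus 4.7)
The plan is to verify three claims of the lemma in sequence: exactly one route per vehicle, feasibility of each route, and the $\O(\rnum)$ running time. All three should follow from a direct inspection of the for-loop in line~\ref{4lpair} of Algorithm~\ref{algpart2}, combined with Lemma~\ref{thetours1}.

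First, I would invoke conditions (b)-(c) of Lemma~\ref{thetours1}: each vehicle $o_{i'}\in K$ appears in exactly one tour of $\T_s$ and exactly one tour of $\T_t$, and the consistency of $\T_s$ with $\T_t$ forces these two tours to form a single consistent pair. Hence the for-loop iterates exactly $\vehiclenum$ times, producing exactly one route per vehicle; condition (a) additionally ensures that each source and destination is touched exactly once across all the routes, so every request is served by exactly one vehicle.

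For feasibility, I would split into the two cases of the algorithm. In Case~1, the route $o_{i'}s_1s_2\dots s_it_1t_2\dots t_i$ picks up $i\leq\lambda$ items and then delivers them, and the source/destination ordering is non-preemptively correct because $T_s$ and $T_t$ share the same index permutation. In Case~2, the key observation is that applying the same index-split of (\ref{thefragments}) to both $T_s$ and $T_t$ makes $B^j_s$ consistent with $B^j_t$ for every $j$. Thus $W^j_{o_{i'}}=B^j_s\circ B^j_t$ (with $B^1_t\bullet o_{i'}$ for $j=1$) picks up at most $\lambda$ requests and then delivers exactly those requests, respecting both the capacity and the non-preemptive constraints within the batch; concatenating the $W^j_{o_{i'}}$ preserves feasibility because each batch finishes all of its deliveries before the next batch begins, so the vehicle is empty at every junction.

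For the $\O(\rnum)$ running time, each iteration uses only linear-time operations: sampling $\theta$, a single scan of $T_s$ and $T_t$ to build the fragments, and a linear concatenation, contributing $\O(\size{T_s}+\size{T_t})$ work. Summing over all iterations and using condition (a) of Lemma~\ref{thetours1} (each source and each destination appears in exactly one tour) yields the bound $\O(\rnum)$. The only mildly subtle point I anticipate is making precise how the fragment-level consistency is inherited from the tour-level consistency of $T_s$ and $T_t$, but this is immediate once the common index-split of (\ref{thefragments}) is stated carefully.
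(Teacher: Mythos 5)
Your proof is correct and follows essentially the same approach as the paper's: one route per vehicle comes from conditions (b)-(c) of Lemma~\ref{thetours1}, feasibility follows from the case analysis (in particular the observation that applying the split in \eqref{thefragments} to both $T_s$ and $T_t$ yields consistent fragments each covering at most $\lambda$ requests, so each batch is emptied before the next begins), and the $\O(\rnum)$ bound follows from a linear scan of the tours. Your version is slightly more explicit than the paper's, but no different in substance.
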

\begin{proof}
Since $\theta\leq \lambda$, lines \ref{4ltour1}-\ref{4ltour2} together with (\ref{thefragments}) guarantee that each $W_{o_{i'}}\in \W$ is a feasible route for its corresponding vehicle $o_{i'}$. 
Moreover, by Lemma~\ref{thetours1}, exactly one route in $\W$ is assigned to each vehicle in $K$.
 
Since we only process consistent tour pairs by line~\ref{4lpair}, all routes in $\W$ can be computed using only $\T_s$. 
By our previous assumption that $\vertexnum = \vehiclenum + 2\rnum = \Theta(\rnum)$ and Lemma~\ref{thetours1}, the total number of vertices across all tours in $\T_s$ is $\O(\vehiclenum + \rnum)\subseteq\O(\rnum)$.
Thus, the running time of Algorithm~\ref{algpart2} is at most $\O(\rnum)$.
\end{proof}

\begin{lemma}\label{theroutes2}
It holds that $\EEE{w(\W)}\leq\frac{2}{\lambda}\sum_{j=1}^{m}w(s_j,t_j)+3(w(\T_s)+w(\T_t))$.
\end{lemma}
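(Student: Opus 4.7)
The plan is to bound $w(W_{o_{i'}})$ for each consistent pair $(T_s,T_t)\in\T_s\times\T_t$ produced in line~\ref{4lpair}, and then sum. Case~1 of Algorithm~\ref{algpart2} is immediate: when $i\le\lambda$ the route is the single path $o_{i'}s_1\dots s_it_1\dots t_i$, and the triangle inequality $w(s_i,t_1)\le w(s_i,o_{i'})+w(o_{i'},t_1)$ already gives $w(W_{o_{i'}})\le w(T_s)+w(T_t)$, which is well within the target.

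The interesting case is Case~2. I would let $q_j$ denote the last source index in fragment $j$ (so $q_j=\theta+(j-1)\lambda$ for $j<N_\theta$ and $q_{N_\theta}=i$) and $p_j=q_{j-1}+1$ the first. The weight $w(W_{o_{i'}})$ decomposes as the sum of the path weights of the fragments $B^j_s,B^j_t$, the intra-fragment ``cross'' edges $C_j=w(s_{q_j},t_{p_j})$, and the inter-fragment ``connection'' edges $D_j=w(t_{q_j},s_{q_j+1})$. Reassembling the fragment paths reconstitutes $w(T_s)+w(T_t)$ minus the depot-return edges $w(s_i,o_{i'}),w(t_i,o_{i'})$, minus $w(o_{i'},t_1)$ (dropped by $B^1_t\bullet o_{i'}$), and minus the cut edges $\sum_{j<N_\theta}\bigl[w(s_{q_j},s_{q_j+1})+w(t_{q_j},t_{q_j+1})\bigr]$.

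The key technical step is to expand $C_j$ and $D_j$ by the triangle inequality along $T_t$ in a way that exactly cancels those negative contributions. Specifically, I would use
\[
C_j\;\le\;w(s_{q_j},t_{q_j})+\sum_{k=p_j}^{q_j-1}w(t_k,t_{k+1})
\qquad\text{and}\qquad
D_j\;\le\;w(t_{q_j},t_{q_j+1})+w(s_{q_j+1},t_{q_j+1}).
\]
Summed over $j$, the first bound telescopes to $w(T_t)$ minus the $T_t$-cut edges and the depot edges, while the second bound turns each connection into a cut edge plus a single request weight. After substituting and collecting terms, the cut edges cancel and one obtains
\[
w(W_{o_{i'}})\;\le\;w(T_s)+2\,w(T_t)+\sum_{j=1}^{N_\theta}w(s_{q_j},t_{q_j})+\sum_{j=1}^{N_\theta-1}w(s_{q_j+1},t_{q_j+1})-\bigl(w(s_i,o_{i'})+w(t_i,o_{i'})+\cdots\bigr).
\]

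Finally, I would take expectation over $\theta\sim\mathrm{Unif}[\lambda]$: for each $k\in[i-1]$ the events ``$k=q_j$ for some $j<N_\theta$'' and ``$k=q_j+1$ for some $j<N_\theta$'' each hold with probability at most $1/\lambda$, while $q_{N_\theta}=i$ contributes deterministically. Hence the expectation of the two request-weight sums is at most $\frac{2}{\lambda}\sum_{k=1}^{i}w(s_k,t_k)+w(s_i,t_i)$, and the stray $w(s_i,t_i)$ is absorbed by the still-subtracted $w(s_i,o_{i'})+w(t_i,o_{i'})$ via one more application of the triangle inequality. Summing the per-pair estimate $\EEE{w(W_{o_{i'}})}\le w(T_s)+2w(T_t)+\frac{2}{\lambda}\sum_{j\text{ in pair}}w(s_j,t_j)\le 3(w(T_s)+w(T_t))+\frac{2}{\lambda}\sum_{j\text{ in pair}}w(s_j,t_j)$ over all consistent pairs in $\T_s\times\T_t$ yields the claim. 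The main obstacle is the careful edge accounting: the bound closes only because the triangle-inequality expansions of $C_j$ and $D_j$ are chosen precisely to cancel the cut and depot edges that come with negative sign in the decomposition.
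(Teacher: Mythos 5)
Your proposal is correct and follows essentially the same random-shift, iterated-tour-partition argument as the paper: decompose the route into fragment paths plus cross and connection edges, bound the latter by the triangle inequality against tour edges and request weights at the cut points, and use the uniform choice of $\theta$ to charge each $w(s_j,t_j)$ at rate $1/\lambda$. The paper states the per-pair bound tersely (``by the triangle inequality, it is easy to obtain\ldots'') while you fill in a detailed and valid edge accounting; one minor imprecision is that the cut edges do not quite ``cancel'' but net to nonpositive coefficients and are then dropped, which does not affect the resulting bound.
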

\begin{proof}
Consider a pair of consistent tours $T_s=o_{i'}s_{1}s_{2}\dots s_{i}\in \T_s$ and $T_t=o_{i'}t_{1}t_{2}\dots t_{i}\in\T_t$. Algorithm~\ref{algpart2} constructs a route $W_{o_{i'}}$ by possibly partitioning $T_s$ and $T_t$. We have the following two cases.

\textbf{Case~1: $i\leq \lambda$.} By line~\ref{4ltour1}, we have $W_{o_{i'}}=T_s\circ (T_t\bullet o_{i'})$. By the triangle inequality, $w(W_{o_{i'}})\leq w(T_s)+w(T_t)$.

\textbf{Case~2: $i>\lambda$.} By line~\ref{4ltour2} and (\ref{theroute}), we have $W_{o_{i'}}=W^1_{o_{i'}}\circ W^2_{o_{i'}}\circ\cdots \circ W^{N_\theta}_{o_{i'}}$, where $W^1_{o_{i'}}=B^1_s\circ (B^1_t\bullet o_{i'})$ and $W^j_{o_{i'}}=B^j_s\circ B^j_t$ for $j>1$. Then, unfolding this sequence, we obtain
\begin{align*}
W_{o_{i'}}&=(o_{i'}s_{1}\dots s_{{\theta}})(t_{1}\dots t_{{\theta}})(s_{{\theta+1}}\dots s_{{\theta+\lambda}})(t_{{\theta+1}}\dots t_{{\theta+\lambda}})\dots (s_{{\theta+\lambda\cdot (N_\theta-2)+1}}\dots s_{i})(t_{{\theta+\lambda\cdot (N_\theta-2)+1}}\dots t_{i}).
\end{align*}

By the triangle inequality, it is easy to obtain $w(W_{o_{i'}})\leq 2\sum_{j=1}^{N_\theta-1}w(s_{\theta+\lambda\cdot (j-1)+1},t_{\theta+\lambda\cdot (j-1)+1})+3(w(T_s)+w(T_t))$,
where the term $\sum_{j=1}^{N_\theta-1}w(s_{\theta+\lambda\cdot (j-1)+1},t_{\theta+\lambda\cdot (j-1)+1})$ is the sum of $w(s_{j'}, t_{j'})$ over all $s_{j'}$ that appears as the first vertex of a fragment (except for $o_{i'}$). Since $\theta$ is chosen uniformly at random from $\{1,\dots,\lambda\}$, each $s_{j'}$ (with $j'>1$) is selected as the first vertex of a fragment with probability $\frac{1}{\lambda}$. Therefore, we have
{\begin{equation}\label{4eqx}
\begin{split}
\EEE{w(W_{o_{i'}})}&\leq 3(w(T_s)+w(T_t))+\frac{2}{\lambda}\sum_{j=2}^{i}w(s_j,t_j)\leq 3(w(T_s)+w(T_t))+\frac{2}{\lambda}\sum_{j=1}^{i}w(s_j,t_j).
\end{split}
\end{equation}}

By Lemma~\ref{thetours1}, the tours in $\T_s$ (resp., $\T_t$) are vertex disjoint and together cover all request sources (resp., destinations).
Therefore, by (\ref{4eqx}), we have $\EEE{w(\W)}\leq3(w(\T_s)+w(\T_t))+\frac{2}{\lambda}\sum_{j=1}^{m}w(s_j,t_j)$.
\end{proof}

Note that Algorithm~\ref{algpart2} can be derandomized in $\O(\rnum^2)$ time either by enumerating all possible values of $\lambda$ and selecting the best one, or by computing the optimal partition via a dynamic programming approach.

\begin{lemma}[Flow lower bound~\citep{DBLP:journals/talg/GuptaHNR10}]\label{thelb}
It holds that $\OPT\geq \mathrm{flowLB}\coloneq\sum_{i=1}^{\rnum}\frac{w(s_i,t_i)}{\lambda}$.
\end{lemma}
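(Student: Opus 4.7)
The plan is to prove this bound by a standard double-counting (or ``item-distance'') argument that exploits both the triangle inequality and the capacity constraint. Fix any feasible solution to the \ac{mDaRP} of total weight $\SOL$. For each request $i\in[\rnum]$, consider the unique vehicle that serves it (non-preemptively): the item is picked up at $s_i$, carried along some sub-walk of that vehicle's route, and dropped off at $t_i$. Let $d_i$ denote the total length of the sub-walk traversed while item $i$ is on board.

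First I would argue $d_i\geq w(s_i,t_i)$ for every $i$. Since the vehicle's position evolves continuously along its route from $s_i$ to $t_i$ during this sub-walk, the sub-walk is itself a walk from $s_i$ to $t_i$, so the triangle inequality applied to the metric $w$ gives $d_i\geq w(s_i,t_i)$. Summing over all requests yields
\begin{equation*}
\sum_{i=1}^{\rnum} d_i \;\geq\; \sum_{i=1}^{\rnum} w(s_i,t_i).
\end{equation*}

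Next I would bound the same quantity $\sum_i d_i$ from above using the capacity constraint. The key observation is a change of order of summation: for each edge $e$ traversed in the solution, let $\ell(e)$ be the number of items on the vehicle while $e$ is traversed; then $\sum_i d_i = \sum_e \ell(e)\cdot w(e)$, where the sum is over the edges of the routes (counted with multiplicity). Since no vehicle ever carries more than $\lambda$ passengers, $\ell(e)\leq \lambda$ for every traversed edge, hence
\begin{equation*}
\sum_{i=1}^{\rnum} d_i \;=\; \sum_{e} \ell(e)\cdot w(e) \;\leq\; \lambda\cdot \sum_{e} w(e) \;=\; \lambda\cdot \SOL.
\end{equation*}
Combining the two inequalities gives $\SOL\geq \frac{1}{\lambda}\sum_{i=1}^{\rnum} w(s_i,t_i)$, and taking $\SOL=\OPT$ finishes the proof.

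This argument is mostly bookkeeping, so the only mild obstacle is making the ``double counting'' rigorous: one must carefully formalize the decomposition of each vehicle's route into edges and define $\ell(e)$ so that both the identity $\sum_i d_i=\sum_e \ell(e)w(e)$ and the inequality $\ell(e)\leq\lambda$ are immediate from non-preemption and the capacity constraint. Once that is set up, the flow lower bound follows with no further work, and it holds uniformly across all vehicles since $\SOL$ is the sum of their individual route weights.
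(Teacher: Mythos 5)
The paper does not give its own proof of this lemma; it simply cites Gupta \emph{et al.} as its source. Your double-counting argument is the standard one for the flow lower bound and is correct: $d_i \geq w(s_i,t_i)$ follows from the triangle inequality applied to the sub-walk carrying item $i$ (and in fact works for the preemptive case too, where the item's trajectory is still a walk from $s_i$ to $t_i$), and the exchange of summation $\sum_i d_i = \sum_e \ell(e)\,w(e)$ together with $\ell(e)\leq\lambda$ gives the claim. This matches the argument in the cited reference, so there is nothing to reconcile with the paper.
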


By Lemmas~\ref{thetours1}, \ref{thetours2}, \ref{theroutes1}, \ref{theroutes2}, and \ref{thelb}, we have the following approximation guarantee.

\begin{theorem}\label{mainres1}
For the \ac{mDaRP}, \textsc{Alg.1} is an $\O(\sqrt{\lambda}\log\rnum)$-approximation algorithm with runtime $\O(\rnum^2)$.
\end{theorem}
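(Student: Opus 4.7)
The plan is to derive Theorem~\ref{mainres1} directly from the lemmas already established in this section, since the machinery has been set up precisely so that assembly is essentially routine. The theorem has two components that I would verify separately: the running time bound $\O(\rnum^2)$ and the approximation ratio $\O(\sqrt{\lambda}\log \rnum)$.

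For the running time, \textsc{Alg.1} simply executes Algorithm~\ref{algpart1} followed by Algorithm~\ref{algpart2}. Lemma~\ref{thetours1} bounds the former by $\O(\rnum^2)$ and Lemma~\ref{theroutes1} bounds the latter by $\O(\rnum)$, so the total is $\O(\rnum^2)$. To obtain a deterministic guarantee, I would invoke the derandomization remark following Lemma~\ref{theroutes2}: either enumerate the $\lambda\leq\rnum$ values of $\theta$ (each evaluated in linear time along the tours in $\T_s$ and $\T_t$) or use dynamic programming; this raises Algorithm~\ref{algpart2} to $\O(\rnum^2)$, which still matches the overall target.

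For the approximation ratio, I would chain three bounds. Starting from Lemma~\ref{theroutes2},
\[
\EEE{w(\W)} \leq \frac{2}{\lambda}\sum_{j=1}^{\rnum} w(s_j,t_j) + 3\bigl(w(\T_s)+w(\T_t)\bigr).
\]
The flow lower bound (Lemma~\ref{thelb}) makes the first summand at most $2\cdot \OPT$, while Lemma~\ref{thetours2} bounds the second by $\O(\sqrt{\lambda}\log \rnum)\cdot\OPT$. Adding these yields the claimed $\O(\sqrt{\lambda}\log \rnum)\cdot\OPT$ bound in expectation, and the derandomization discussed above converts it into a worst-case guarantee.

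I do not expect any genuine obstacle here, because every nontrivial ingredient has already been proved: the Erd\H{o}s--Szekeres-based construction of consistent tours in Lemma~\ref{structure}, the batch-delivery structural property used for Lemma~\ref{thetours2}, and the shortcutting and fragment analysis behind Lemma~\ref{theroutes2}. The only subtlety to keep track of is that the two summands in Lemma~\ref{theroutes2} live on different scales and must each be separately reconciled with $\OPT$—one via the flow lower bound, the other via the consistent-tour bound—before being combined into the single $\O(\sqrt{\lambda}\log\rnum)$ factor.
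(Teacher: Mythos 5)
Your proposal is correct and follows exactly the intended assembly that the paper summarizes with ``By Lemmas~\ref{thetours1}, \ref{thetours2}, \ref{theroutes1}, \ref{theroutes2}, and \ref{thelb}'': the running time comes from $\O(\rnum^2)$ for Algorithm~\ref{algpart1} plus $\O(\rnum)$ (or $\O(\rnum^2)$ after derandomization) for Algorithm~\ref{algpart2}, and the approximation ratio follows by plugging Lemma~\ref{thelb} and Lemma~\ref{thetours2} into the two terms of Lemma~\ref{theroutes2}. This matches the paper's route with no meaningful deviation.
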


Interestingly, we have the following corollary.

\begin{corollary}
\textsc{Alg.1} achieves an approximation ratio of $\O(\sqrt{\lambda}\log\rnum)$ even when all vehicles have capacity $\Theta(\sqrt{\lambda})$.
\end{corollary}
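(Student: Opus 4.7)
The plan is to trace through the analysis of Theorem~\ref{mainres1} while carefully distinguishing between the capacity $\lambda$ that characterizes the benchmark optimum $\OPT$ and the capacity $\lambda' = \Theta(\sqrt{\lambda})$ that the algorithm actually respects when forming its fragments. The key observation is that these two roles are decoupled in the analysis of \textsc{Alg.1}: shrinking the algorithm's capacity only affects the routing step (Algorithm~\ref{algpart2}), not the structural comparison with $\OPT$ carried out in Algorithm~\ref{algpart1}.

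I would first verify that Lemma~\ref{thetours2} continues to hold unchanged. Its proof builds, from any solution satisfying the batch delivery property of Lemma~\ref{structure2} (whose batches have size at most $\lambda$, the capacity of $\OPT$), a feasible \ac{mTSP} solution in $H$ whose cost is at most $O(\sqrt{\lambda}\log m)\cdot\OPT$; the 2-approximation produced in line~\ref{2approx} of Algorithm~\ref{algpart1} then inherits this bound. Nowhere does this construction depend on the capacity that the algorithm uses for routing. Consequently, $w(\T_s)+w(\T_t)\leq O(\sqrt{\lambda}\log m)\cdot\OPT$ remains valid when we later constrain the vehicles to carry only $\lambda'$ passengers.

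Next, I would rerun Lemma~\ref{theroutes2} with Algorithm~\ref{algpart2} modified so that $\theta$ is drawn uniformly from $[\lambda']$ and each consistent tour pair is partitioned into fragments of size at most $\lambda'$. The same union-bound/triangle-inequality calculation yields
\[
\EEE{w(\W)} \leq \frac{2}{\lambda'}\sum_{j=1}^{m} w(s_j,t_j) + 3\bigl(w(\T_s)+w(\T_t)\bigr).
\]
Plugging in the flow lower bound of Lemma~\ref{thelb}, which gives $\sum_{j=1}^{m} w(s_j,t_j)\leq \lambda\cdot\OPT$ (with the $\lambda$ here being the optimum's capacity), and using $\lambda' = \Theta(\sqrt{\lambda})$, the first term becomes $O(\lambda/\sqrt{\lambda})\cdot\OPT = O(\sqrt{\lambda})\cdot\OPT$, while the second term is $O(\sqrt{\lambda}\log m)\cdot\OPT$ by the previous paragraph. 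Summing yields the claimed $O(\sqrt{\lambda}\log m)$ ratio.

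The only subtle point — and the place where one must be careful rather than mechanical — is recognizing that the two occurrences of the capacity in the analysis play genuinely different roles: the $\lambda$ in the flow lower bound and the batch-size bound refers to $\OPT$'s capacity and cannot be reduced, whereas the $\lambda$ in the denominator of the randomized partition refers to the algorithm's fragment size and can be. Once that asymmetry is made explicit, the corollary falls out; I do not anticipate any further technical obstacles.
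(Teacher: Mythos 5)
Your proposal is correct and follows essentially the same approach as the paper: it keeps Lemma~\ref{thetours2} (which bounds $w(\T_s)+w(\T_t)$ against $\OPT$ with capacity $\lambda$) unchanged, rescales the first term of Lemma~\ref{theroutes2} to $\frac{2}{\Theta(\sqrt{\lambda})}\sum_j w(s_j,t_j)$ because the fragments now have size $\Theta(\sqrt{\lambda})$, and then applies the flow lower bound of Lemma~\ref{thelb} with $\OPT$'s capacity $\lambda$ to get $\O(\sqrt{\lambda})\cdot\OPT$ for that term. Your explicit decoupling of the ``benchmark capacity'' $\lambda$ from the ``algorithm capacity'' $\lambda'$ is exactly the observation the paper's proof relies on, just stated more carefully.
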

\begin{proof}
If all vehicles have capacity $\Theta(\sqrt{\lambda})$, then the first term $\frac{2}{\lambda} \sum_{j=1}^\rnum w(s_j,t_j)$ in the upper bound of Lemma~\ref{theroutes2} scales as $\frac{1}{\Theta(\sqrt{\lambda})}\sum_{j=1}^{\rnum}w(s_j,t_j)$, since each fragment covers $\Theta(\sqrt{\lambda})$ requests.
Hence, by Lemmas~\ref{thetours2} and \ref{thelb}, the approximation ratio of \textsc{Alg.1} remains $\O(\sqrt{\lambda}\log\rnum)$.
\end{proof}

Note that the previous approximation algorithms for the \ac{DaRP} in \citep{DBLP:conf/focs/CharikarR98,DBLP:journals/talg/GuptaHNR10} also possess this property, although their approaches differ from that of \textsc{Alg.1}.

We believe the structural properties used in our \textsc{Alg.1} can be extended more related \acp{DaRP}. 
For instance, it can be easily verified that one can obtain an $\O(\sqrt{\lambda}\log\rnum)$-approximation algorithm for the non-preemptive makespan \ac{mDaRP} by applying an $\O(1)$-approximate solution to the makespan \ac{mTSP}~\citep{DBLP:journals/orl/EvenGKRS04} instead in the graph $H$ to obtain two consistent sets of tours.
Notably, we need to slightly modify the weight function $\widetilde{w}$ in \eqref{thefunction} so that $\widetilde{w}(x,y)=w(o_i,s_j)+w(o_i,t_j)+\frac{1}{\lambda}w(s_j,t_j)$ for any $x=k_i\in V^1_H$ and $y=l_j\in V^2_H$, and $\widetilde{w}(x,y)=
w(s_i,s_j)+w(t_i,t_j)+\frac{1}{\lambda}(w(s_i,t_i)+w(s_j,t_j))$ for any $x=l_i\in V^2_H$ and $y=l_j\in V^2_H$.

\subsection{The Second Algorithm}
Our second algorithm, denoted by \textsc{Alg.2}, is motivated by the following observation. When $\lambda=\Theta(\rnum)$, \textsc{Alg.1} attains its worst-case approximation ratio of $\O(\sqrt{\rnum}\log\rnum)$. In this regime, however, an $\O(1)$-approximation for the \ac{DaRP} can be obtained through a simple approach.

We begin by computing an $\O(1)$-approximate \ac{TSP} tour $T_s$ (resp., $T_t$) in $G[X\cup K]$ (resp., $G[Y\cup K]$). 
Clearly, $w(T_s)+w(T_t)=\O(1)\cdot\OPT$. Traversing the concatenated tour $T_s\circ T_t$ once allows the vehicle to serve $\lambda$ requests. Since $\lambda = \Theta(\rnum)$, all requests can be served by traversing $T_s \circ T_t$ at most $\Ceil{\frac{\rnum}{\lambda}} = \O(1)$ times.
Hence, the weight of the resulting route is at most $\O(1)\cdot (w(T_s)+w(T_t))=\O(1)\cdot\OPT$.
Therefore, when $\lambda$ is sufficiently large, there exists an approximation algorithm for the \ac{DaRP} that outperforms \textsc{Alg.1}. However, the above method may yield at best an $\O(\frac{\rnum}{\lambda})$-approximation in general. In \textsc{Alg.2}, we employ a more refined approach to achieve an improved approximation ratio of $\O(\sqrt{\frac{\rnum}{\lambda}})$ for the \ac{mDaRP}.

In \textsc{Alg.2}, we begin by grouping requests via a \emph{Steiner forest} in $G[X\cup Y]$ that covers all vertices in $X\cup Y$, ensuring that the source and destination of each request lie within the same tree. 
The task of finding such a minimum-weight forest is the classical \emph{Steiner forest problem}. 
Using the well-known primal-dual algorithm of~\citep{GoemansW95}, we obtain a $2$-approximate Steiner forest $\F$. 

We then transform $\F$ into a set of walks $\W$ with the following properties:
(a) each walk contains exactly one vehicle;
(b) each vehicle appears in exactly one walk (a walk may contain a single vehicle only);
(c) for every request, both its source and destination appear in the same walk.

To achieve this, we double all edges in $\F$, add an $\O(1)$-approximate \ac{mTSP} solution $\T$ in $G[X \cup K]$ with the depots given by $K$, which together forms a set of connected components, and then, for each component, perform shortcutting to obtain a simple walk. 

Select an arbitrary walk $W \in \W$ and let $o_W$ denote its vehicle. Let $N_W$ be the number of requests contained in $W$.
Define $W_s$ (resp., $W_t$) as the walk obtained from $W$ by shortcutting all destinations (resp., sources) and the vehicle $o_W$. 
For instance, if $W=os_1t_4s_7s_4t_1t_7$, then we have $N_W=3$, $W_s=s_1s_7s_4$ and $W_t=t_4t_7t_1$. 

As in \textsc{Alg.1}, we partition $W_s$ and $W_t$ simultaneously to form two sets of fragments $\B_s$ and $\B_t$, where each fragment in $\B_s$ (resp., $\B_t$) contains at most $\lambda$ sources (resp., destinations), but we use a new \emph{decomposition} method. 
These fragments are then combined to form a feasible route for $o_W$ serving all $N_W$ requests in $W$.

We first describe how to use the decomposition method to construct $\B_s$ and $\B_t$.
We consider two cases.

\textbf{Case~1: $N_W\leq \lambda$.} In this case, since $W_s$ (resp., $W_t$) contains at most $\lambda$ sources (resp., destinations), we simply set $\B_s=\{W_s\}$ and $\B_t=\{W_t\}$.

\textbf{Case~2: $N_W>\lambda$.} 
We may assume that $N_W$ is divisible by $\lambda$ and that $\frac{N_W}{\lambda}$ is the square of some integer. 
The reason is as follows. 
If $N_W$ does not satisfy these two properties, we can express it as $N_W=\lambda\cdot N_0^2+N'$ for some integers $N_0$ and $N'$ such that $\lambda\cdot N_0^2< N_W<\lambda\cdot (N_0+1)^2$. We then add $\lambda\cdot (N_0+1)^2-N_W$ \emph{dummy} requests to obtain a new walk $\widetilde{W}$, where the source and destination of each dummy request coincide with the vehicle's location in $W$.
The resulting walk $\widetilde{W}$ has $N_{\widetilde{W}}=\lambda\cdot (N_0+1)^2$ and thus meets the two desired properties.
Since dummy requests can be served at zero weight, the new instance is equivalent to the original.
Moreover, since $\lambda\cdot (N_0+1)^2-N_W\leq \lambda\cdot (2N_0+1)<3\lambda\cdot N_0^2\leq 3N_W$, the number of added requests is at most $3N_W$, which does significantly affect the instance size.

We then partition $W_s$ and $W_t$.
Compared with the partition procedure in \textsc{Alg.1}, the main differences are as follows: (1) in \textsc{Alg.2}, $W_s$ may not be consistent with $W_t$, and thus $\B_s$ may not be consistent with $\B_t$; (2) due to the two properties of $W$, each fragment in $\B_s$ (resp., $\B_t$) contains exactly $\lambda$ sources (resp., destinations).

Let $W_s=s_{p_1}s_{p_2}\dots s_{p_{\lambda\cdot N_0^2}}$. We first partition $W_s$ into a set of $N_0$ fragments $\B'_s=\{B^1_s,\dots,B^{N_0}_s\}$, where, for each $i\in[N_0]$, 
$B^i_s=s_{p_{(i-1)\lambda\cdot N_0+1}}s_{p_{(i-1)\lambda\cdot N_0+2}}\dots s_{p_{i\lambda\cdot N_0}}$.
For each $B^i_s\in \B'_s$, we construct a fragment $B^i_t$ by shortcutting $W_t$ so that $B^i_t$ is weakly consistent with $B^i_s$.
Hence, we also obtain a corresponding set of fragments $\B'_t=\{B^1_t,\dots,B^{N_0}_t\}$.
By construction, $\B'_s$ and $\B'_t$ are weakly consistent, and each fragment in $\B'_s$ (resp., $\B'_t$) consists of exactly $\lambda\cdot N_0$ sources (resp., destinations).

Let $B^i_t=t_{q_{(i-1)\lambda\cdot N_0+1}}t_{q_{(i-1)\lambda\cdot N_0+2}}\dots t_{q_{i\lambda\cdot N_0}}$ for each $i\in[N_0]$.
Next, we turn to partition each $B^{i}_t\in \B'_t$ into a set of $N_0$ fragments $\B^i_t=\{B^{i,1}_t,B^{i,2}_t,\dots,B^{i,N_0}_t\}$, where 
$B^{i,j}_t=t_{q_{(i-1)\lambda\cdot N_0+(j-1)\cdot\lambda+1}}t_{q_{(i-1)\lambda\cdot N_0+(j-1)\cdot\lambda+2}}\dots t_{q_{(i-1)\lambda\cdot N_0+j\cdot\lambda}}$.
Similarly, for each $B^{i,j}_t\in\B^{i}_t$, we obtain a weakly consistent fragment $B^{i,j}_s$ by shortcutting the fragment $B^i_s$.
Then, we also obtain a set of fragments $\B^i_s=\{B^{i,1}_s,B^{i,2}_s,\dots,B^{i,N_0}_s\}$, which is weakly consistent with $\B^i_t$.

Finally, let $\B_s\coloneqq\bigcup_{i\in[N_0]}\B^{i}_s$ and $\B_t\coloneqq\bigcup_{i\in[N_0]}\B^{i}_t$.
By construction, $\B_s$ and $\B_t$ are weakly consistent, and each fragment in $\B_s$ (resp., $\B_t$) contains exactly $\lambda$ sources (resp., destinations).
We may relabel $\B_s=\{B'^1_s,\dots,B'^{N_0^2}_s\}$ and $\B_t=\{B'^1_t,\dots,B'^{N_0^2}_t\}$ so that, for any $i<j$, the first vertex of $B'^i_s$ appears before that of $B'^j_s$ in $W$.

We have $N_0^2$ pairs of weakly consistent fragments in $\B_s\cup\B_t$, each serving $\lambda$ requests.
Then, we transform them into $N_0^2$ walks $\{W^1_{o_W},W^2_{o_W},\dots,W^{N_0^2}_{o_W}\}$, where $W^j_{o_W}=B'^j_s\circ B'^j_t$ for $j\in[N_0^2]$. 
Finally, we obtain a route $W_{o_W}$ for the vehicle $o_W$ by concatenating these walk: $W_{o_W}=o_W\circ W^1_{o_W}\circ W^2_{o_W}\circ\cdots \circ W^{N_0^2}_{o_W}$ (see Algorithm~\ref{alg2}).

\begin{algorithm}[t]
\caption{The second algorithm (\textsc{Alg.2}) for the \ac{mDaRP}}
\label{alg2}
\small
\begin{algorithmic}[1]
\Require An instance $G=(V,E,w)$ of the \ac{mDaRP}.

\Ensure A solution to the \ac{mDaRP}.

\State Initialize $\W_o\coloneqq \emptyset$.

\State\label{alg5s2} Compute a $2$-approximate Steiner forest $\F$ in $G[X\cup Y]$ using the algorithm in~\citep{GoemansW95}. 

\State\label{alg5s3} Compute a $2$-approximate \ac{mTSP} solution $\T$ in $G[X\cup K]$ with depots $K$ using the algorithm in~\citep{DBLP:journals/orl/XuXR11}. 

\State\label{alg5s4} Obtain a set of walks $\W$ by doubling all edges in $\F$, adding the edges of $\T$, and then shortcutting each component.

\For{each walk $W\in \W$}\label{alg5s5}

\State\label{alg5s7} Let $o_W$ and $N_w$ denote the vehicle and the number of requests in $W$. 
Moreover, let $W_s$ (resp., $W_t$) denote the walk obtained by shortcutting all destinations (resp., sources) and the vehicle $o_W$ from $W$.

\If{$N_W\leq\lambda$}\label{alg5s8}

\State\label{alg5s9} Let $\B_s=\{W_s\}$ and $\B_t=\{W_t\}$.

\Else\label{alg5s10}

\State\label{alg5s11} Let $N_W=\lambda\cdot N_0^2$ by possibly adding some dummy requests at $o_W$, and let $W_s=s_{p_1}s_{p_2}\dots s_{p_{\lambda\cdot N_0^2}}$. 

\State\label{alg5s13} Obtain two sets of fragments $\B'_s=\{B^1_s,\dots,B^{N_0}_s\}$ and $\B'_t=\{B^1_t,\dots,B^{N_0}_t\}$, where, for each $i\in[N_0]$, $B^i_s=s_{p_{(i-1)\lambda\cdot N_0+1}}\dots s_{p_{i\lambda\cdot N_0}}$, and $B^i_t$ is weakly consistent with $B^i_s$, obtained by shortcutting $W_t$. 

\State\label{alg5s14} Let $B^i_t=t_{q_{(i-1)\lambda\cdot N_0+1}}t_{q_{(i-1)\lambda\cdot N_0+2}}\dots t_{q_{i\lambda\cdot N_0}}$ for each $i\in[N_0]$.

\State\label{alg5s15} For each $B^{i}_t\in \B'_t$, obtain two sets of fragments $\B^i_t=\{B^{i,1}_t,\dots,B^{i,N_0}_t\}$ and $\B^i_s=\{B^{i,1}_s,\dots,B^{i,N_0}_s\}$, where, for each $j\in[N_0]$, $B^{i,j}_s$ is weakly consistent with $B^{i,j}_t$, obtained by shortcutting $B^i_s$.

\State\label{alg5s16} Let $\B_s\coloneqq\bigcup_{i\in[N_0]}\B^{i}_s$ and $\B_t\coloneqq\bigcup_{i\in[N_0]}\B^{i}_t$.

\EndIf\label{alg5s17}

\State\label{alg5s18} Relabel $\B_s=\{B'^1_s,\dots,B'^{N_0^2}_s\}$ and $\B_t=\{B'^1_t,\dots,B'^{N_0^2}_t\}$ such that, for any $i<j$, the first vertex of $B'^i_s$ appears before that of $B'^j_s$ in $W$.

\State\label{alg5s19} Let $\W_o\coloneqq\W_o\cup\{W_{o_W}\}$, where $W_{o_W}=o_W\circ W^1_{o_W}\circ W^2_{o_W}\circ\cdots \circ W^{N_0^2}_{o_W}$ and $W^j_{o_W}=B'^j_s\circ B'^j_t$ for $j\in[N_0^2]$.
\EndFor\label{alg5s21}

\State\label{alg5s22} \Return $\W_o$.

\end{algorithmic}
\end{algorithm}

\begin{lemma}\label{theweakly}
Algorithm~\ref{alg2} computes, in $\O(\rnum^2\log\rnum)$ time, a set of walks $\W$ satisfying conditions (a)-(c), and a set of feasible routes $\W_o$ such that $w(\W_o)\leq \lrA{8\sqrt{\frac{\rnum}{\lambda}}+1}(2w(\F)+w(\T))+2\sum_{i=1}^{\rnum}\frac{w(s_i,t_i)}{\lambda}$.
\end{lemma}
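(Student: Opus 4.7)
The plan is to verify the three assertions in turn---running time, structural conditions (a)--(c) on $\W$, and the weight bound on $\W_o$---with the weight analysis being the core of the argument. For the running time, I would bound each step: step~\ref{alg5s2} runs in $\O(\rnum^2\log\rnum)$ via the Goemans--Williamson primal-dual algorithm~\citep{GoemansW95}; step~\ref{alg5s3} in $\O(\rnum^2)$ by~\citep{DBLP:journals/orl/XuXR11}; step~\ref{alg5s4}, which doubles $\F$, adds $\T$, and shortcuts each component, runs in time linear in the multigraph size. The for-loop processes each walk in time linear in its size, and the total walk size is $\O(\rnum)$, so the loop contributes $\O(\rnum)$, leaving the overall runtime $\O(\rnum^2\log\rnum)$.

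Conditions (a)--(c) follow directly from the construction. The mTSP solution $\T$ places each vehicle in exactly one tour, so in the multigraph $2\F\cup\T$ every connected component contains exactly one vehicle, yielding (a) and (b); because $\F$ is a Steiner forest with $s_i$ and $t_i$ in the same tree, every request has both endpoints in the same component, yielding (c). Shortcutting the Eulerian traversal of each component preserves all three properties.

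For the weight bound, I would start from $w(\W)\le 2w(\F)+w(\T)$, which is immediate from the doubling-and-shortcut construction, and then analyze $w(W_{o_W})$ per walk $W\in\W$ with vehicle $o_W$ and $N_W$ requests. Letting $W_s,W_t$ be the source-only and destination-only shortcuts of $W$, we have $w(W_s),w(W_t)\le w(W)$. In Case~1 ($N_W\le\lambda$) the route is essentially $o_W\circ W_s\circ W_t$, and a triangle-inequality argument gives $w(W_{o_W})=\O(w(W))$, contributing the ``$+1$'' part of the leading factor. In Case~2 the two-level decomposition does the heavy lifting: since each $B^{i,j}_s$ is a shortcut of $B^i_s$, we have $\sum_{j\in[N_0]}w(B^{i,j}_s)\le N_0\cdot w(B^i_s)$ and hence $\sum_{i,j}w(B^{i,j}_s)\le N_0\cdot w(W_s)$; since the $B^{i,j}_t$ partition $B^i_t$ with $w(B^i_t)\le w(W_t)$, we have $\sum_{i,j}w(B^{i,j}_t)\le N_0\cdot w(W_t)$. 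With $N_0=\sqrt{N_W/\lambda}$ (after the dummy-request padding, which at most triples $N_W$), the fragment traversal inside $W_{o_W}$ costs at most $2\sqrt{N_W/\lambda}\cdot w(W)$, and aggregating over walks with $\sqrt{N_W/\lambda}\le\sqrt{\rnum/\lambda}$ gives a contribution $\O(\sqrt{\rnum/\lambda})\cdot w(\W)$, which accounts for the $\sqrt{\rnum/\lambda}(2w(\F)+w(\T))$ term.

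The main obstacle is bookkeeping for the $N_0^2$ inter-fragment junctions per walk. Here I would invoke weak consistency: the junction between $B'^j_s$ and $B'^j_t$ is bounded, by routing through a source $s_i$ shared by the two fragments, as the sum of $w(B'^j_s)$ (already absorbed into the fragment traversal budget) plus one $w(s_i,t_i)$; the junction from $B'^j_t$ to $B'^{j+1}_s$ is handled similarly by routing through a source--destination pair at the boundary and paying another $w(s_{i'},t_{i'})$. Because each fragment contains exactly $\lambda$ requests, this charging scheme pays at most a $2/\lambda$ fraction of each $w(s_i,t_i)$ on average, producing the additive $\frac{2}{\lambda}\sum_{i=1}^{\rnum}w(s_i,t_i)$. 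Combining everything, and absorbing the padding-induced constants together with the Case~1/Case~2 splits into the leading factor $8$, delivers the stated bound on $w(\W_o)$.
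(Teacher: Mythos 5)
Your proposal reconstructs the paper's high-level plan faithfully: the bound $w(\W)\leq 2w(\F)+w(\T)$, the two-level decomposition giving $w(\B_s)+w(\B_t)\leq 2N_0\cdot w(W)$ with $N_0=\O(\sqrt{\rnum/\lambda})$ after padding, and charging each inter-fragment crossing to a minimum request edge averaged over the $\lambda$ requests within the fragment, producing the additive $\frac{2}{\lambda}\sum_i w(s_i,t_i)$ term. Conditions (a)--(c) and the running time are argued as in the paper, and the padding estimate (the number of dummy requests is at most $3N_W$, so $N_W$ grows by at most a factor of $4$, giving $N_0\leq 2\sqrt{\rnum/\lambda}$) is a harmless constant-factor slip.

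There is, however, a gap in the junction bookkeeping. For the mid-fragment junction $B'^j_s\to B'^j_t$, your route omits the $w(B'^j_t)$ traversal needed to reach the first destination from the crossing request's $t_i$; the paper pairs the two ``diagonal'' distances and bounds $w(s_{p_{j\lambda}},t_{q_{(j-1)\lambda+1}})+w(s_{p_{(j-1)\lambda+1}},t_{q_{j\lambda}})\leq w(B'^j_s)+w(B'^j_t)+2w(e_j)$. More importantly, your handling of the inter-fragment junction $B'^j_t\to B'^{j+1}_s$ (``routing through a source--destination pair at the boundary'') does not close. After routing back within $B'^j_t$ to some $t_{i'}$ and crossing to $s_{i'}\in B'^j_s$, you are still left with the hop $w(s_{i'},s_{p_{j\lambda+1}})$ between a source of fragment $j$ and the first source of fragment $j+1$, and this is not controlled by any fragment weight or any single request edge. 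The paper handles it with the one ingredient you omit: the relabeling in line~\ref{alg5s18}, which orders the fragments so their first source vertices appear in the same order as in $W$. This makes the telescoping sum $w(o_W,s_{p_1})+\sum_{i=1}^{N_0^2-1}w(s_{p_{(i-1)\lambda+1}},s_{p_{i\lambda+1}})$ a shortcut of $W$, hence at most $w(W)$, and the remaining $t_{q_{j\lambda}}\to s_{p_{j\lambda+1}}$ hop is then bounded via $w(s_{p_{(j-1)\lambda+1}},t_{q_{j\lambda}})+w(s_{p_{(j-1)\lambda+1}},s_{p_{j\lambda+1}})$. Without invoking this ordering, the source-to-source hops between consecutive fragments are unbounded and the stated inequality does not follow.
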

\begin{proof}
By line~\ref{alg5s4}, all components are obtained by doubling all edges of the Steiner forest $\F$ and then adding the edges of the \ac{mTSP} solution $\T$. 
Since $G[X\cup K]$ is an instance of the \ac{mTSP} with depots given by the vertices in $K$, the \ac{mTSP} and Steiner forest properties ensure that (1) each component contains exactly one depot in $K$, (2) each vehicle appears in exactly one component, and (3) the source and destination of each request lie within in the same component. 
Since $\W$ is obtained by shortcutting each component, conditions (a)-(c) follow immediately. Moreover, by the triangle inequality, we have
\begin{equation}\label{eqx0}
w(\W)\leq 2w(\F) + w(\T).
\end{equation}

By lines~\ref{alg5s5}-\ref{alg5s21}, the route set $\W_o$ is obtained by constructing a route $W_{o_W}$ for the vehicle $o_w$ from each walk $W=o_W\dots\in \W$, which is feasible by construction.

We now analyze the weight of $\W_o$.
Fix an arbitrary walk $W=o_W\dots\in \W$ and consider its corresponding route $W_{o_W}$ for the vehicle $o_w$.
By line~\ref{alg5s19}, the route $W_{o_W}$ is obtained by using the weakly consistent sets of fragments $\B^s$ and $\B^t$.
If the number of requests in $W$ satisfies $N_W\leq \lambda$, then lines~\ref{alg5s8}-\ref{alg5s9} yield $w(\B_s)\leq w(W_s)$ and $w(\B_t)\leq w(W_t)$.
Otherwise, lines~\ref{alg5s10}-\ref{alg5s17} imply
\begin{equation}\label{eqx1}
w(\B_s) = \sum_{i=1}^{N_0} w(\B^i_s)\quad\quad\text{ and }\quad\quad w(\B_t) = \sum_{i=1}^{N_0} w(\B^i_t).
\end{equation}

By lines~\ref{alg5s14}-\ref{alg5s15}, for each $i\in[N_0]$, $\B^i_t$ is obtained by partitioning $B^i_t\in\B'_t$, and $\B^i_s$ is obtained by shortcutting $B^i_s$ to create a weakly consistent fragment for each fragment in $\B^i_t$. Hence, for each $i\in[N_0]$, we have 
\begin{equation}\label{eqx2}
w(\B^i_t)\leq w(B^i_t)\quad\quad\text{ and }\quad\quad w(\B^i_s)\leq N_0\cdot w(B^i_s).
\end{equation}

Similarly, by line~\ref{alg5s13}, we have
\begin{equation}\label{eqx3}
w(\B'_s)=\sum_{i=1}^{N_0}w(B^i_s)\leq w(W_s)\quad\quad\text{ and }\quad\quad w(\B'_t)=\sum_{i=1}^{N_0}w(B^i_t)\leq N_0\cdot w(W_t).
\end{equation}

Since the number of dummy requests is at most $3\rnum$, we have $N_0^2=\frac{N_W}{\lambda} \leq \frac{4\rnum}{\lambda}$. Then, by \eqref{eqx1}-\eqref{eqx3}, we have
\begin{equation}\label{eqx4}
w(\B_s)\leq N_0\cdot w(W_s) = 2\sqrt{\frac{\rnum}{\lambda}}\cdot w(W_s)
\quad\quad\text{ and }\quad\quad 
w(\B_t)\leq N_0\cdot w(W_t) = 2\sqrt{\frac{\rnum}{\lambda}}\cdot w(W_t).
\end{equation}

By line~\ref{alg5s7}, $W_s$ and $W_t$ are obtained by shortcutting $W$. Then, by the triangle inequality and \eqref{eqx4}, we have 
\begin{equation}\label{eqx5}
w(\B_s) + w(\B_t)\leq 4\sqrt{\frac{\rnum}{\lambda}}\cdot w(W).
\end{equation}
We remark that \eqref{eqx5} also holds when $N_W\leq \lambda$.

Recall from line~\ref{alg5s18} that $\B_s=\{B'^1_s,\dots,B'^{N_0^2}_s\}$ and $\B_t=\{B'^1_t,\dots,B'^{N_0^2}_t\}$.
For each $i\in[N_0^2]$, let 
\[
B'^i_s=s_{p_{(i-1)\lambda+1}}\dots s_{p_{i\lambda}} \quad\quad\text{ and }\quad\quad B'^i_t=t_{q_{(i-1)\lambda+1}}\dots t_{q_{i\lambda}},
\]
where $p_{(i-1)\lambda+1}p_{(i-1)\lambda+2}\dots p_{i\lambda}$ and $q_{(i-1)\lambda+1}q_{(i-1)\lambda+2}\dots q_{i\lambda}$ are permutations of the same set.
By line~\ref{alg5s19}, 
{\small\begin{align}
w(W_{o_W}) &\leq w(o_W, s_{p_1}) + \sum_{i=1}^{N_0^2}w(\B'^i_s) + \sum_{i=1}^{N_0^2}w(s_{p_{i\lambda}}, t_{q_{(i-1)\lambda+1}}) + \sum_{i=1}^{N_0^2}w(\B'^i_t) + \sum_{i=1}^{N_0^2-1}w(t_{q_{i\lambda}}, s_{p_{i\lambda+1}})\label{eqx6}\\
&\leq w(\B_s)+w(\B_t) + \sum_{i=1}^{N_0^2}\lrA{w(s_{p_{i\lambda}}, t_{q_{(i-1)\lambda+1}})+w(s_{p_{(i-1)\lambda+1}},t_{q_{i\lambda}})} + w(o_W, s_{p_1}) + \sum_{i=1}^{N_0^2-1}w(s_{p_{(i-1)\lambda+1}}, s_{p_{i\lambda+1}}),\notag
\end{align}}\!
where the second inequality follows from $\sum_{i=1}^{N_0^2}w(\B'^i_s) + \sum_{i=1}^{N_0^2}w(\B'^i_t) = w(\B_s)+w(\B_t)$ and $w(t_{q_{i\lambda}}, s_{p_{i\lambda+1}})\leq w(s_{p_{(i-1)\lambda+1}},t_{q_{i\lambda}})+w(s_{p_{(i-1)\lambda+1}}, s_{p_{i\lambda+1}})$ by the triangle inequality.

By line~\ref{alg5s18}, for any $i<j$, the first vertex of $B'^i_s$ appears before that of $B'^j_s$ in $W$. Thus, we have $w(o_W, s_{p_1}) + \sum_{i=1}^{N_0^2-1}w(s_{p_{(i-1)\lambda+1}}, s_{p_{i\lambda+1}})\leq w(W)$.
Moreover, for any $i\in[N_0^2]$, let $e_i$ denote the minimum-weight edge connecting a vertex of $B'^i_s$ with one of $B'^t_i$. 
By the triangle inequality, $w(s_{p_{i\lambda}}, t_{q_{(i-1)\lambda+1}})+w(s_{p_{(i-1)\lambda+1}},t_{q_{i\lambda}})\leq w(B'^i_s) + w(B'^i_t) + 2w(e_i)$. Since $w(e_i)\leq \sum_{j=1}^{\lambda}\frac{w\lrA{s_{p_{(i-1)\lambda+j}}, t_{p_{(i-1)\lambda+j}}}}{\lambda}$, by \eqref{eqx5} and \eqref{eqx6}, we obtain 
{\small\begin{align*}
w(W_{o_W}) &\leq w(\B_s)+w(\B_t) + w(W) + \sum_{i=1}^{N_0^2}\lrA{w(B'^i_s) + w(B'^i_t) + 2\sum_{j=1}^{\lambda}\frac{w\lrA{s_{p_{(i-1)\lambda+j}}, t_{p_{(i-1)\lambda+j}}}}{\lambda}}\\
& = 2(w(\B_s)+w(\B_t)) + w(W) + 2\sum_{i=1}^{N_0^2}\frac{w(s_{p_{i}},t_{p_i})}{\lambda}\leq \lrA{8\sqrt{\frac{\rnum}{\lambda}}+1}w(W)+2\sum_{i=1}^{N_0^2}\frac{w(s_{p_i},t_{p_i})}{\lambda}.
\end{align*}}

Summing over all $W\in\W$ yields $w(\W_o)\leq \lrA{8\sqrt{\frac{\rnum}{\lambda}}+1}w(\W)+2\sum_{i=1}^{\rnum}\frac{w(s_i,t_i)}{\lambda}\leq \lrA{8\sqrt{\frac{\rnum}{\lambda}}+1}(2w(\F)+w(\T))+2\sum_{i=1}^{\rnum}\frac{w(s_i,t_i)}{\lambda}$, where the last inequality follows from \eqref{eqx0}.
\end{proof}

By the triangle inequality, the Steiner forest $\F$ in line~\ref{alg5s2} satisfies $w(\F)\leq 2\cdot\OPT$, and the \ac{mTSP} solution $\T$ in line~\ref{alg5s3} satisfies $w(\T)\leq 2\cdot\OPT$. 
These bounds are also referred to as the \emph{Steiner lower bounds}~\citep{DBLP:journals/talg/GuptaHNR10}.
Combining these bounds with Lemmas~\ref{thelb} and~\ref{theweakly} yields $w(\W_o)\leq \lrA{8\sqrt{\frac{\rnum}{\lambda}}+1}\cdot 6\cdot\OPT + 2\cdot\OPT=8\lrA{6\sqrt{\frac{\rnum}{\lambda}} + 1}\cdot\OPT$.
Recall that for the \ac{mDaRP}, we assume, at the cost of losing a factor of 2, that each location in $V$ contains at most one vehicle. Therefore, the final approximation ratio is $16\lrA{6\sqrt{\frac{\rnum}{\lambda}} + 1}=\O(\sqrt{\frac{\rnum}{\lambda}})$.

\begin{theorem}\label{mainres2}
For the \ac{mDaRP}, \textsc{Alg.2} is an $\O(\sqrt{\frac{\rnum}{\lambda}})$-approximation algorithm with runtime $\O(\rnum^2\log\rnum)$.
\end{theorem}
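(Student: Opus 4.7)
The plan is to combine the weight estimate from Lemma~\ref{theweakly} with three standard lower bounds on $\OPT$. Specifically, I will establish: (i) a Steiner-forest lower bound $w(\F)\le 2\cdot\OPT$, because any feasible \ac{mDaRP} solution already contains a forest connecting each source to its destination (via the route that serves the request), so the minimum Steiner forest in $G[X\cup Y]$ has weight at most $\OPT$, and the Goemans--Williamson 2-approximation in line~\ref{alg5s2} then yields $w(\F)\le 2\cdot\OPT$; (ii) an \ac{mTSP} lower bound $w(\T)\le 2\cdot\OPT$, because shortcutting all destinations out of an optimal solution yields a feasible \ac{mTSP} solution on $G[X\cup K]$ of weight at most $\OPT$ (by the triangle inequality), and the 2-approximation of~\citep{DBLP:journals/orl/XuXR11} used in line~\ref{alg5s3} then gives $w(\T)\le 2\cdot\OPT$; and (iii) the flow lower bound from Lemma~\ref{thelb}, $\sum_{i=1}^{\rnum} w(s_i,t_i)/\lambda\le\OPT$.

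Substituting all three bounds into the inequality of Lemma~\ref{theweakly} gives
\[
w(\W_o)\le \lrA{8\sqrt{\tfrac{\rnum}{\lambda}}+1}\cdot(2\cdot 2\,\OPT + 2\,\OPT) + 2\,\OPT = 8\lrA{6\sqrt{\tfrac{\rnum}{\lambda}}+1}\cdot\OPT.
\]
Finally, I account for the preprocessing step of~(\ref{simpleass}): collapsing colocated vehicles at the cost of a factor of $2$ yields a final ratio of $16\lrA{6\sqrt{\rnum/\lambda}+1}=\O(\sqrt{\rnum/\lambda})$, as claimed.

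For the running time, I will track the cost of each line of Algorithm~\ref{alg2}. Since the graph $G[X\cup Y]$ has $\O(\rnum)$ vertices, the Goemans--Williamson primal-dual algorithm in line~\ref{alg5s2} runs in $\O(\rnum^2\log\rnum)$ time; the 2-approximate \ac{mTSP} algorithm of~\citep{DBLP:journals/orl/XuXR11} in line~\ref{alg5s3} runs in $\O(\rnum^2)$ time; and the edge-doubling, merging, and shortcutting in line~\ref{alg5s4} takes $\O(\rnum)$ time. For the loop in lines~\ref{alg5s5}--\ref{alg5s21}, each iteration processes a walk $W$ in time linear in $|W|$ (building $W_s, W_t$, partitioning, and concatenating), and $\sum_{W\in\W}|W|=\O(\rnum)$, so the loop contributes $\O(\rnum)$ in total. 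The Steiner-forest computation dominates, giving the overall bound $\O(\rnum^2\log\rnum)$.

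I expect the only delicate point to be cleanly verifying the \ac{mTSP} lower bound: I must describe the map from an optimal \ac{mDaRP} solution to a feasible \ac{mTSP} solution (shortcut out all destinations and keep the depots) and invoke the triangle inequality to show its weight is at most $\OPT$. Once this is in hand, the remainder of the argument is mechanical substitution into Lemma~\ref{theweakly}.
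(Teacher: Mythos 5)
Your proposal is correct and takes essentially the same route as the paper: substitute the Steiner-forest bound $w(\F)\le 2\OPT$, the mTSP bound $w(\T)\le 2\OPT$, and the flow bound of Lemma~\ref{thelb} into Lemma~\ref{theweakly}, then pay a factor of $2$ for the colocated-vehicle assumption in~(\ref{simpleass}). One small imprecision: shortcutting an optimal mDaRP solution to $X\cup K$ yields \emph{walks}, not tours, so it is not literally a feasible mTSP solution; the clean justification of $w(\T)\le 2\OPT$ (which the paper invokes implicitly via the ``Steiner lower bounds'') is that these walks witness a constrained spanning forest on $X\cup K$ of weight at most $\OPT$, and the doubling-based $2$-approximation returns tours of weight at most twice that forest's weight.
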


Since $\min\lrC{\sqrt{\lambda}\log\rnum,\sqrt{\frac{\rnum}{\lambda}}} \leq \sqrt[4]{\rnum}\log^{\frac{1}{2}}\rnum$, by Theorems~\ref{mainres1} and \ref{mainres2}, we obtain the following result.

\begin{theorem}\label{mainres3}
For the \ac{mDaRP}, there exists an $\O\!\lrA{\sqrt[4]{\rnum}\log^{\frac{1}{2}}\rnum}$-approximation algorithm with runtime $\O(\rnum^2\log\rnum)$. In particular, when $\rnum = o(\vertexnum^2\log^6\vertexnum)$, the approximation ratio improves $o(\sqrt{\vertexnum}\log^2\vertexnum)$.
\end{theorem}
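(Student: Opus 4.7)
The plan is to combine \textsc{Alg.1} and \textsc{Alg.2} in the straightforward way: run both algorithms on the input instance and return the solution of smaller weight. By Theorems~\ref{mainres1} and~\ref{mainres2}, this combined procedure achieves an approximation ratio of $\min\lrC{\sqrt{\lambda}\log\rnum,\, \sqrt{\rnum/\lambda}}$, with a total runtime dominated by \textsc{Alg.2}, i.e., $\O(\rnum^2) + \O(\rnum^2\log\rnum) = \O(\rnum^2\log\rnum)$.

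The first key step is to bound the minimum of the two ratios. The observation is that their product is independent of $\lambda$:
\[
\sqrt{\lambda}\log\rnum \cdot \sqrt{\rnum/\lambda} \;=\; \sqrt{\rnum}\log\rnum.
\]
Since the minimum of two positive reals is at most their geometric mean, we get
\[
\min\lrC{\sqrt{\lambda}\log\rnum,\, \sqrt{\rnum/\lambda}} \;\leq\; \sqrt{\sqrt{\rnum}\log\rnum} \;=\; \sqrt[4]{\rnum}\log^{1/2}\rnum,
\]
which yields the first half of the theorem.

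For the second half, suppose $\rnum = o(\vertexnum^2\log^6\vertexnum)$. Then $\log\rnum = O(\log\vertexnum)$, so
\[
\sqrt[4]{\rnum}\log^{1/2}\rnum \;=\; o\lrA{(\vertexnum^2\log^6\vertexnum)^{1/4}} \cdot O(\log^{1/2}\vertexnum) \;=\; o\lrA{\sqrt{\vertexnum}\log^{3/2}\vertexnum} \cdot O(\log^{1/2}\vertexnum) \;=\; o\lrA{\sqrt{\vertexnum}\log^2\vertexnum},
\]
which matches the claimed improvement over the $\O(\sqrt{\vertexnum}\log^2\vertexnum)$ ratio of~\citep{DBLP:journals/talg/GuptaHNR10}.

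There is no substantial obstacle here; the proof is essentially a one-line geometric-mean argument followed by asymptotic bookkeeping. The only thing worth being careful about is that the bound on $\log\rnum$ in the second part relies on the hypothesis $\rnum = o(\vertexnum^2\log^6\vertexnum)$ (so that $\rnum$ is polynomially bounded in $\vertexnum$); without such a hypothesis one cannot absorb $\log\rnum$ into $O(\log\vertexnum)$.
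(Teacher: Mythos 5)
Your proof is correct and takes essentially the same approach as the paper: the paper also combines \textsc{Alg.1} and \textsc{Alg.2}, invokes Theorems~\ref{mainres1} and~\ref{mainres2}, and relies on the inequality $\min\lrC{\sqrt{\lambda}\log\rnum,\sqrt{\rnum/\lambda}}\leq\sqrt[4]{\rnum}\log^{1/2}\rnum$, which you justify via the geometric mean and the paper leaves implicit. The asymptotic bookkeeping for the $\rnum=o(\vertexnum^2\log^6\vertexnum)$ case is also the intended argument, and your remark that $\log\rnum=O(\log\vertexnum)$ hinges on $\rnum$ being polynomially bounded in $\vertexnum$ under that hypothesis is a correct and worthwhile observation.
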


Theorem~\ref{mainres3} implies that when $\rnum = o(\vertexnum^2)$, our approximation ratio achieves $o\lra{\sqrt{\vertexnum}}$, thereby breaking the $\O(\sqrt{\vertexnum})$-approximation barrier.
Moreover, when $\rnum = o(\vertexnum^2\log^6\vertexnum)$, our approximation ratio even improves the previously best-known approximation ratio of $\O(\sqrt{\vertexnum}\log^2\vertexnum)$ for the \ac{DaRP}~\citep{DBLP:journals/talg/GuptaHNR10}.

We remark that the lower bounds used for analyzing the approximation ratios of our algorithms in Theorems~\ref{mainres1}-\ref{mainres3} are even lower bounds of the preemptive version of the \ac{mDaRP}.
\citet{DBLP:conf/focs/CharikarR98} showed that, for the \ac{DaRP}, the optimal value of the preemptive version can be at least $\Omega(\vertexnum^{\frac{1}{3}})$ times that of the non-preemptive version, while \citet{DBLP:journals/talg/GuptaHNR10} showed that this gap persists as $\Omega\lrA{\frac{\vertexnum^{\frac{1}{8}}}{\log^3\vertexnum}}$ even in the Euclidean plane.

\section{An Improved Approximation Algorithm for the \ac{mDaRP}}
In this section, we apply our \textsc{Alg.1} and \textsc{Alg.2} to design an $\O(\sqrt{\vertexnum\log\vertexnum})$-approximation algorithm for the \ac{mDaRP} by extending the methods from \citep{DBLP:journals/talg/GuptaHNR10}.\footnote{\citet{DBLP:journals/talg/GuptaHNR10} showed that any $\rho$-approximation algorithm for the \ac{DaRP} with $\rnum=\O(\vertexnum^4)$ can be transformed into an $\O(\rho)$-approximation algorithm for the \ac{DaRP} with arbitrary $\rnum$.
Hence, directly applying \textsc{Alg.1} and \textsc{Alg.2} to their reduction would yield only an $\O(\vertexnum\sqrt{\log\vertexnum})$-approximation algorithm for the \ac{DaRP}.}
By Theorem~\ref{mainres3}, we may assume that $\rnum=\Omega(\vertexnum^2)$.

In the graph $G=(V,E,w)$, there are $\vertexnum(\vertexnum-1)$ possible distinct requests, denoted by $\P=\{(s^*_i,t^*_i)\}_{i=1}^{\vertexnum(\vertexnum-1)}$. 
Recall that $\R$ consists of $\rnum$ requests. 
For each request $(s^*_i,t^*_i)\in\P$, let $N_i$ denote its multiplicity in $\R$, which we also refer to as its \emph{size}. 
Clearly, $\sum_{i=1}^{\vertexnum(\vertexnum-1)}N_i=\rnum$.

Let $\alpha = \Theta\lrA{\frac{\vertexnum}{\log\vertexnum}}$ and $\beta = \Theta\lrA{\sqrt{\vertexnum\log\vertexnum}}$.
We classify each request $(s^*_i,t^*_i)\in\P$ as follows: it is \emph{small} if $N_i<\frac{\lambda}{\alpha}$, \emph{big} if $N_i>\frac{\lambda}{\beta}$, and \emph{normal} otherwise.
This classification partitions $\P$ into three disjoint subsets: $\P_{small}$, $\P_{normal}$, and $\P_{big}$, corresponding to small, normal, and big requests, respectively.

We next handle the requests in $\P_{small}$, $\P_{big}$, and $\P_{normal}$, respectively.

\textbf{Small requests.} We directly apply \textsc{Alg.2} to serve all requests in $\P_{small}$. We may require that all vehicles return to their initial locations at the end, which increases the total weight by at most a factor of 2 by the triangle inequality. By Theorem~\ref{mainres2}, the resulting vehicle routes have a total weight at most
{\begin{equation}\label{part1small}
\O\!\lrA{\sqrt{\frac{\size{\P_{small}}\cdot \frac{\lambda}{\alpha}}{\lambda}}}\cdot\OPT\leq \O\!\lrA{\frac{\vertexnum}{\sqrt{\alpha}}}\cdot\OPT = \O\!\lrA{\sqrt{\vertexnum\log\vertexnum}}\cdot\OPT,
\end{equation}}
where the inequality follows from $\size{P_{small}}\leq\size{\P}=\vertexnum(\vertexnum-1)$.

\textbf{Big requests.} As in line~\ref{alg5s3} of \textsc{Alg.2}, we first compute a $2$-approximate \ac{mTSP} solution $\T$ in $G[X\cup K]$ with depots $K$ using the algorithm in~\citep{DBLP:journals/orl/XuXR11}. Note that we treat $X$ and $K$ as sets, and thus $\size{X},\size{K}\leq \vertexnum$.  
Each vehicle in $K$ then follows its assigned tour in $\T$. Whenever it visits the source of a request in $(s^*_i,t^*_i)\in\P_{big}$, it serves these requests greedily: the vehicle picks up $\lambda$ items at $s^*_i$, delivers them to $t^*_i$, returns to $s^*_i$, and repeats this process exactly $\Ceil{\frac{N_i}{\lambda}}$ times. The total route weight is at most $w(\T) + \sum_{(s^*_i,t^*_i)\in\P_{big}}\Ceil{\frac{N_i}{\lambda}}\cdot 2w(s^*_i, t^*_i)$. Note that $w(\T)\leq \O(1)\cdot\OPT$ and 
\[
\sum_{(s^*_i,t^*_i)\in\P_{big}}\Ceil{\frac{N_i}{\lambda}}\cdot 2w(s^*_i, t^*_i)\leq \sum_{(s^*_i,t^*_i)\in\P_{big}}\lrA{1+\beta}\frac{2N_i}{\lambda}w(s^*_i, t^*_i)\leq 2\lrA{1+\beta}\sum_{i=1}^{\rnum}\frac{w(s_i,t_i)}{\lambda}\leq\O(\beta)\cdot\OPT,
\]
where the first inequality follows from  $\Ceil{\frac{N_i}{\lambda}}\leq\frac{N_i}{\lambda}+1$ and $N_i>\frac{\lambda}{\beta}$ for each $(s^*_i,t^*_i)\in\P_{big}$, and the last from Lemma~\ref{thelb}.
Since $\beta=\Theta(\sqrt{\vertexnum\log\vertexnum})$, the total weight for serving all big requests is at most
{\begin{equation}\label{part2large}
w(\T) + \sum_{(s^*_i,t^*_i)\in\P_{big}}\Ceil{\frac{N_i}{\lambda}}\cdot 2w(s^*_i, t^*_i) \leq  \O\!\lrA{\sqrt{\vertexnum\log\vertexnum}}\cdot\OPT.
\end{equation}}

\textbf{Normal requests.} By definition, for each request $(s^*_i,t^*_i)\in\P_{normal}$, we have $\frac{\lambda}{\alpha}\leq N_i\leq\frac{\lambda}{\beta}$.
Hence, the total number of remaining requests in $\R$ is at most $\vertexnum^2\cdot\frac{\lambda}{\beta}$. We may assume $\lambda \gg \alpha$; otherwise, by Theorem~\ref{mainres1}, \textsc{Alg.1} satisfies all these requests with weight at most $\O(\sqrt{\alpha}\log\frac{\vertexnum^2\cdot\lambda}{\beta})\cdot\OPT\leq  \O(\sqrt{\vertexnum\log\vertexnum})\cdot\OPT$, as $\alpha=\Theta\lrA{\frac{\vertexnum}{\log\vertexnum}}$.

Let $I_u$ denote the unweighted \ac{mDaRP} instance consisting of all remaining requests, each with unit size.
We obtain a new unweighted \ac{mDaRP} instance $\widetilde{I}_u$ from $I_u$ via the following three steps.

\textbf{Step~1: Construct a weighted \ac{mDaRP} instance $\widehat{I}_w$.}
We scale down the vehicle capacity to $\widehat{\lambda}\coloneq \Floor{\frac{\lambda}{\gamma}} \cdot \gamma = \Floor{\alpha}\cdot\gamma$, where $\gamma\coloneq\frac{\lambda}{\alpha}$. 
For each request $(s^*_i,t^*_i)\in\P_{normal}$, we scale up its size to $\widehat{N}_i\coloneq \Ceil{\frac{N_i}{\gamma}}\cdot\gamma $. Given that $\frac{\lambda}{\alpha}\leq N_i\leq \frac{\lambda}{\beta}$, we have 
$
\widehat{N}_i\in\lrC{\gamma,2\cdot\gamma,\dots,\Ceil{\frac{\alpha}{\beta}}\cdot\gamma}.
$
Since $N_i\leq\widehat{N}_i\ll \widehat{\lambda}$, the instance $\widehat{I}_w$ is feasible.

\textbf{Step~2: Scale to an equivalent weighted \ac{mDaRP} instance $\widetilde{I}_w$.}
Dividing both the vehicle capacity and request sizes by $\gamma$ yields an equivalent weighted instance, with vehicle capacity $\widetilde{\lambda}\coloneq\frac{\widehat{N}_i}{\gamma}=\Floor{\alpha}$ and request sizes $\widetilde{N}_i\coloneq \frac{\widetilde{N}_i}{\gamma}\in\lrC{1,2,\dots,\Ceil{\frac{\alpha}{\beta}}}$.

\textbf{Step~3: Convert to an unweighted \ac{mDaRP} instance $\widetilde{I}_u$.} We construct an unweighted instance $\widetilde{I}_u$ by replacing each request $(s^*_i,t^*_i)\in\P_{normal}$ with $\widetilde{N}_i$ copies. 
Let $\widetilde{\R}$ denote the resulting (multi-)set of requests.
Then,  $\size{\widetilde{\R}}\leq \Ceil{\frac{\alpha}{\beta}}\cdot\vertexnum^2=\O(\vertexnum^{\frac{5}{2}})$, and the vehicle capacity remains $\widetilde{\lambda} = \Floor{\alpha} = \O\!\lrA{\frac{\vertexnum}{\log\vertexnum}}$.
By Theorem~\ref{mainres1}, \textsc{Alg.1} computes a solution of total weight
{\begin{align}\label{thetemweight}
\O\!\lrA{\sqrt{\frac{\vertexnum}{\log\vertexnum}}\log\vertexnum^{\frac{5}{2}}}\cdot\OPT(\widetilde{I}_u)&\leq\O\!\lrA{\sqrt{\vertexnum\log\vertexnum}}\cdot\OPT(\widetilde{I}_u)\leq \O\!\lrA{\sqrt{\vertexnum\log\vertexnum}}\cdot \OPT(\widetilde{I}_w),
\end{align}}
where the last inequality holds because $\widetilde{I}_u$ is a relaxation of $\widetilde{I}_w$, and thus $\OPT(\widetilde{I}_u)\leq \OPT(\widetilde{I}_w)$.

By extending the methods from \citep{DBLP:journals/talg/GuptaHNR10}, we obtain the following property for the \ac{mDaRP}.

\begin{lemma}\label{reduction1}
Consider a weighted \ac{mDaRP} instance $I$ with vehicle capacity $\lambda$, where each request size $N_i$ satisfies $N_i\leq\lambda$. Let $J$ be the corresponding unweighted instance obtained by replacing each request $(s^*_i,t^*_i)$ of size $N_i$ with $N_i$ unit requests.
Suppose $J$ admits a solution consisting of $r$ routes $\{W_1,\dots, W_r\}$ with total weight $\SOL \coloneq\sum_{j=1}^r w(W_j)$. 
Then, there exists a polynomial-time computable feasible solution for $I$ with weight at most $3\cdot \SOL$.
\end{lemma}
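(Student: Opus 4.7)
The plan is to transform the given $r$-route solution $\{W_1,\dots,W_r\}$ for $J$ into a feasible solution for $I$ in three phases: (i) designate, for each weighted request in $I$, a single route that will handle it atomically; (ii) modify each route via short detours so that it atomically serves its assigned requests; and (iii) bound the resulting total weight by $3\SOL$.

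For phase (i), since each weighted request $(s^*_i,t^*_i)$ of size $N_i\le\lambda$ has at least one of its $N_i$ unit copies served by the $J$-solution, some route must visit both $s^*_i$ and $t^*_i$; let $\sigma(i)\in[r]$ be the index of one such route (chosen, e.g., by fixing an arbitrary unit copy of request $i$ and taking its serving route), and set $R_j=\sigma^{-1}(j)$. For phase (ii), I would construct $W'_j$ by traversing $W_j$ as an empty ``skeleton'' walk and, at the first visit to $s^*_i$ for each $i\in R_j$, inserting an atomic round-trip $s^*_i\to t^*_i\to s^*_i$ during which the vehicle picks up all $N_i$ items of request $i$, delivers them at $t^*_i$, and returns empty before continuing $W_j$. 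Since each detour carries only one request's items and $N_i\le\lambda$, non-preemptive feasibility is immediate. The triangle inequality yields $w(W'_j)\le w(W_j)+2\sum_{i\in R_j}w(s^*_i,t^*_i)$, and summing over $j$ gives a total weight of at most $\SOL+2\sum_i w(s^*_i,t^*_i)$; phase (iii) therefore reduces to the key inequality $\sum_i w(s^*_i,t^*_i)\le\SOL$.

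The \textbf{main obstacle} is proving this key inequality, since the standard flow lower bound only gives $\sum_i N_i w(s^*_i,t^*_i)\le\lambda\SOL$, which is too weak when $N_i<\lambda$. My plan is to exploit the structural fact that each designated route $W_{\sigma(i)}$ contains a sub-walk $\pi_i$ from the first $s^*_i$-visit to the subsequent first $t^*_i$-visit with $w(\pi_i)\ge w(s^*_i,t^*_i)$; by choosing $\sigma$ carefully---e.g., assigning each request $i$ to the unit copy whose pickup-delivery walk $\pi_i$ is shortest---and processing requests in an order derived from their appearance in each route, I expect the collection $\{\pi_i : i\in R_j\}$ to form an (essentially) edge-disjoint cover within each $W_j$, giving $\sum_{i\in R_j}w(s^*_i,t^*_i)\le w(W_j)$ and summing to the claimed bound. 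If perfect edge-disjointness proves elusive, an alternative is to replace the round-trip detour in phase (ii) with an in-place substitution at the first copy's pickup and delivery positions in $W_{\sigma(i)}$ (inserting corrective detours only when the substitution would overflow capacity), which reuses more of $W_j$'s weight and is expected to keep the total within $3\SOL$ without invoking the global inequality.
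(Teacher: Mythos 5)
Your proposal has a genuine gap, and it lies exactly where you suspected: the ``key inequality'' $\sum_i w(s^*_i,t^*_i)\leq\SOL$ is simply false in general, and the hoped-for edge-disjointness of the sub-walks $\pi_i$ cannot be salvaged. Consider an instance in which all requests in $I$ already have unit size ($N_i=1$ for all $i$) and a single route in the $J$-solution picks up $\lambda$ sources clustered near a point $u$ and then delivers $\lambda$ destinations clustered near a point $v$ at distance $D$; then $\SOL\approx D$ while $\sum_i w(s^*_i,t^*_i)\approx\lambda D$, so your phase-(iii) bound degrades to $(1+2\lambda)\SOL$. The root cause is that a capacity-$\lambda$ vehicle can be in transit with up to $\lambda$ requests simultaneously, so the pickup-to-delivery sub-walks of distinct requests assigned to the same route can overlap an edge up to $\lambda$ times; no choice of $\sigma$, segment, or processing order removes that overlap. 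Your phase-(ii) construction of one atomic out-and-back detour per weighted request squanders all batching and is inherently off by a $\Theta(\lambda)$ factor, which is precisely what makes the lemma hard.

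The paper's proof takes a fundamentally different route that preserves batching. It unfolds each route $W_j$ into a line metric $L_j$ with $w(L_j)=w(W_j)$, so each unit copy of request $i$ corresponds to a segment of some line. For each weighted request it then assigns the full size $N_i$ to one of its $N_i$ segments. A uniformly random choice keeps the expected total size on every line edge at most $\lambda$ (since the unweighted solution itself never exceeded $\lambda$ on any edge), which bounds the expected flow lower bound over all lines by $\SOL$; choosing the minimum-weight segment derandomizes this. After trimming each line to the span of its assigned segments, the Steiner lower bound is also at most $w(L_j)$. The lemma then follows by invoking the known $3$-approximation for line \ac{DaRP} (which is proved against the flow and Steiner lower bounds, so it automatically handles weighted requests) on each trimmed line and lifting back to the original metric. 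In short: rather than fighting capacity by serving requests one at a time, the paper reinterprets the routes as one-dimensional sub-instances where the $3$-approximation already does the heavy lifting, and the random segment assignment is the mechanism that guarantees those sub-instances have lower bounds summing to $\SOL$. Your ``alternative'' (in-place substitution with corrective detours) gestures toward the right regime but leaves the capacity overflows, and hence the cost of the corrections, completely unbounded; making that precise would essentially require rediscovering the line-embedding and lower-bound machinery.
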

\begin{proof}
For each route $W_j$, construct a line (i.e., a path) $L_j$ by ``unfolding'' $W_j$ from its root as in~\citep{DBLP:journals/talg/GuptaHNR10}: every edge traversal of $W_j$ becomes one edge appended to $L_j$ with the same weight. Hence, $w(L_j) = w(W_j)$.
For any request $i$, i.e., $(s^*_i,t^*_i)$, the unweighted solution serves its $N_i$ unit copies, each traversing a segment from $s_i$ to $t_i$ on some route $W_j$. 
Therefore, request $i$ induces a collection of $N_i$ segments distributed among the lines $\{L_j\}_{j=1}^{r}$. 
Let $N_i^j$ denote the number of segments corresponding to request $i$ on $L_j$, and for each edge $e\in L_j$, let $N_i^{j,e}$ denote the number of these segments that contain $e$. 
Let $N_e^j\coloneq\sum_i N_i^{j,e}$ denote the total number of unweighted requests carried by $W_j$ when traversing edge $e$.
Since the unweighted solution is feasible, we must have $N_e^j \leq \lambda$ for all $j$ and $e\in L_j$.

\textbf{Random assignment.}
We randomly construct a weighted \ac{mDaRP} instance as follows: for each request $i$, choose uniformly at random one of its $N_i$ segments and assign the entire size $N_i$ to that segment.
For any edge $e\in L_j$, $\mathbb{P}\lrB{\text{request } i \text{ chooses a segment containing } e} = \frac{N_i^{j,e}}{N_i}$, so the random total size on $e$ is
$\sum_i N_i \cdot \mathbf{1}\{\text{request } i\text{ chooses a segment containing }e\}$,
and hence $\mathbb{E}[\text{size on } e]
= \sum_i N_i \cdot \frac{N_i^{j,e}}{N_i}
= \sum_i N_i^{j,e} = N_e^j \leq \lambda$.
Therefore, the expected flow lower bound on $L_j$ satisfies $\mathbb{E}[\mathrm{flowLB}(L_j)]
= \frac{1}{\lambda}\sum_{e\in L_j} w(e)\cdot \mathbb{E}[\text{size on } e] \leq \sum_{e\in L_j} w(e) = w( W_j)$.
Summing over all $j$ yields $\sum_{j=1}^r \mathbb{E}[\mathrm{flowLB}(L_j)] \leq \sum_{j=1}^r w(W_j) = \SOL$.

\textbf{Derandomization by minimum-weight segment selection and trimming.}
For each request $i$, let its $N_i$ candidate segments have weights $w_{i,1},\dots,w_{i,N_i}$.
If we assign the entire size $N_i$ to one segment of weight $w'$, then its contribution to the flow lower bound equals
$\frac{N_i}{\lambda}\cdot w'$.
Under the uniform random choice, the expected contribution equals
$\frac{N_i}{\lambda}\cdot \frac{1}{N_i}\sum_{i'=1}^{N_i}w_{i,i'}$.
Hence, choosing the segment with weight $w_{i}\coloneq\min_{i'\in [N_i]} w_{i,i'}$ yields a contribution
$\frac{N_i}{\lambda}\cdot w_{i} \leq \frac{N_i}{\lambda}\cdot \frac{1}{N_i}\sum_{i'}^{N_i} w_{i,i'}$,
which is no larger than the random expectation.
Summing over all requests (and lines) gives a deterministic assignment with $\sum_{j=1}^r \mathrm{flowLB}(L_j) \leq \sum_{j=1}^r \mathbb{E}[\mathrm{flowLB}(L_j)] \leq \SOL$.

After this assignment, the tail of a line $L_j$ may contain edges not covered by any chosen segment.
Removing these edges yields the \emph{trimmed line} $\widetilde{L}_j$.
Clearly, $\mathrm{flowLB}(\widetilde{L}_j) = \mathrm{flowLB}(L_j)$ and $w(\widetilde{L}_j) \leq w(L_j)$.
For a line metric, the Steiner lower bound equals the weight of the minimum-weight subpath covering all request endpoints and the vehicle, which is $w(\widetilde{L}_j)$. Thus,
$\sum_{j=1}^r \mathrm{SteinerLB}(\widetilde{L}_j)
= \sum_{j=1}^r w(\widetilde{L}_j) \leq \sum_{j=1}^r w(L_j) = \SOL$.

\textbf{Approximation on each line and lifting.}
By \citep{DBLP:journals/talg/GuptaHNR10}, the 3-approximation algorithm for the \ac{DaRP} on lines~\citep{krumke2000approximation} extends to weighted requests and is guaranteed
w.r.t.\ the preemptive flow and Steiner lower bounds. Applying this algorithm independently on each $\widetilde{L}_j$ yields a feasible route of weight at most
$3\cdot \max\lrC{\mathrm{flowLB}(\widetilde{L}_j),\mathrm{SteinerLB}(\widetilde{L}_j)}$.
Summing over $j$ and using $\sum_{j=1}^r\mathrm{flowLB}(\widetilde{L}_j)\leq \SOL$ and $\sum_{j=1}^r\mathrm{SteinerLB}(\widetilde{L}_j)\leq \SOL$, the total weight is at most $3\cdot\SOL$.
Since each $\widetilde{L}_j$ is a subpath of $L_j$, which in turn was obtained by unfolding $W_j$, mapping these routes back to the original metric yields a solution to the weighted \ac{mDaRP} instance $I$ while preserving the total weight.
All steps run in polynomial time.

Therefore, from the unweighted \ac{mDaRP} solution with total weight $\SOL$, we obtain a weighted \ac{mDaRP} solution of weight at most $3\cdot\SOL$ in polynomial time.
\end{proof}

\begin{corollary}\label{thesolution}
There is a polynomial-time algorithm that computes a solution to the weighted \ac{mDaRP} instance $\widehat{I}_w$ with weight at most $O\lrA{\sqrt{\vertexnum\log\vertexnum}}\cdot \OPT(\widehat{I}_w)$.
\end{corollary}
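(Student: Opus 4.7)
The plan is to chain together the three reductions set up in Step~1--Step~3 above together with Theorem~\ref{mainres1} and Lemma~\ref{reduction1}, and then lift the resulting solution back to $\widehat{I}_w$ by reversing the rescaling.

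First, I would apply Theorem~\ref{mainres1} (\textsc{Alg.1}) to the unweighted instance $\widetilde{I}_u$. The capacity in $\widetilde{I}_u$ is $\widetilde{\lambda}=\Floor{\alpha}=\O\!\lrA{\tfrac{\vertexnum}{\log\vertexnum}}$, and the number of unit requests is $\size{\widetilde{\R}}=\O(\vertexnum^{5/2})$. Hence \textsc{Alg.1} produces a feasible solution to $\widetilde{I}_u$ whose weight is bounded by $\O\!\lrA{\sqrt{\widetilde{\lambda}}\log\size{\widetilde{\R}}}\cdot\OPT(\widetilde{I}_u)=\O\!\lrA{\sqrt{\vertexnum\log\vertexnum}}\cdot\OPT(\widetilde{I}_u)$. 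Using the relaxation inequality $\OPT(\widetilde{I}_u)\leq \OPT(\widetilde{I}_w)$ already recorded in \eqref{thetemweight}, this gives a set of $r$ routes $\{W_1,\dots,W_r\}$ for $\widetilde{I}_u$ with total weight $\SOL=\O\!\lrA{\sqrt{\vertexnum\log\vertexnum}}\cdot\OPT(\widetilde{I}_w)$.

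Next, I would invoke Lemma~\ref{reduction1} with the pair $(I,J)=(\widetilde{I}_w,\widetilde{I}_u)$. The hypothesis $\widetilde{N}_i\leq\widetilde{\lambda}$ is satisfied since $\widetilde{N}_i\leq\Ceil{\alpha/\beta}\ll\Floor{\alpha}=\widetilde{\lambda}$, so the lemma converts the unweighted routes $\{W_1,\dots,W_r\}$ into a feasible solution to the weighted instance $\widetilde{I}_w$ in polynomial time, with weight at most $3\cdot\SOL=\O\!\lrA{\sqrt{\vertexnum\log\vertexnum}}\cdot\OPT(\widetilde{I}_w)$.

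Finally, I would transfer the solution from $\widetilde{I}_w$ back to $\widehat{I}_w$. By construction, $\widetilde{I}_w$ is obtained from $\widehat{I}_w$ simply by dividing both the capacity $\widehat{\lambda}$ and every request size $\widehat{N}_i$ by the common factor $\gamma$, while leaving the underlying metric, the depots, and the source/destination pairs untouched. Since a route is feasible if and only if the load carried on every edge does not exceed the capacity, and this condition is invariant under dividing all loads and the capacity by $\gamma$, the two instances have exactly the same set of feasible solutions and identical edge weights; in particular, $\OPT(\widetilde{I}_w)=\OPT(\widehat{I}_w)$, and the solution output by the previous step is already feasible for $\widehat{I}_w$ with weight $\O\!\lrA{\sqrt{\vertexnum\log\vertexnum}}\cdot\OPT(\widehat{I}_w)$. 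The only substantive step is the second one (the invocation of Lemma~\ref{reduction1}); the rest is essentially bookkeeping to check the hypothesis $\widetilde{N}_i\leq\widetilde{\lambda}$ and the scale-invariance of feasibility.
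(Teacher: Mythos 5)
Your proof is correct and follows essentially the same chain as the paper's: apply \textsc{Alg.1} to the unit-request instance $\widetilde{I}_u$ to get the bound in \eqref{thetemweight}, invoke Lemma~\ref{reduction1} with $(I,J)=(\widetilde{I}_w,\widetilde{I}_u)$ to lift that to a solution of $\widetilde{I}_w$ with only a constant-factor loss, and observe that $\widetilde{I}_w$ and $\widehat{I}_w$ are the same instance up to a uniform rescaling by $\gamma$ of both capacity and request sizes, so the solution and its weight bound carry over verbatim. You simply spell out the hypothesis check $\widetilde{N}_i\leq\widetilde{\lambda}$ and the scale-invariance of feasibility more explicitly than the paper's terse two-sentence proof.
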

\begin{proof}
By \eqref{thetemweight} and Lemma~\ref{reduction1}, there is a polynomial-time algorithm that computes a solution to the weighted \ac{mDaRP} instance $\widetilde{I}_w$ with weight at most $O\lrA{\sqrt{\vertexnum\log\vertexnum}}\cdot \OPT(\widetilde{I}_w)$.

Since $\widehat{I}_w$ and $\widetilde{I}_w$ are equivalent encodings of the same instance, the same solution also applies to $\widehat{I}_w$, with weight at most $O\lrA{\sqrt{\vertexnum\log\vertexnum}}\cdot \OPT(\widetilde{I}_w) = O\lrA{\sqrt{\vertexnum\log\vertexnum}}\cdot \OPT(\widehat{I}_w)$.
\end{proof}

\begin{lemma}\label{part3normal}
There is a polynomial-time algorithm that computes a solution to the unweighted \ac{mDaRP} instance $I_u$ with weight at most $O\lrA{\sqrt{\vertexnum\log\vertexnum}}\cdot \OPT(I_u)\leq O\lrA{\sqrt{\vertexnum\log\vertexnum}}\cdot \OPT$.
\end{lemma}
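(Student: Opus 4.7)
The plan is to invoke Corollary~\ref{thesolution}, reinterpret its output as an $I_u$-solution of the same weight, and then bound $\OPT(\widehat{I}_w)$ against $\OPT(I_u)$. Let $\mathcal{S}$ be the solution to $\widehat{I}_w$ produced by Corollary~\ref{thesolution}, so $w(\mathcal{S})\leq O(\sqrt{\vertexnum\log\vertexnum})\cdot\OPT(\widehat{I}_w)$. Combining these three ingredients yields a polynomial-time solution to $I_u$ of weight $O(\sqrt{\vertexnum\log\vertexnum})\cdot\OPT(I_u)$, and since $I_u$ is a sub-instance of the original \ac{mDaRP} (restricted to the remaining normal requests), any feasible solution to the full instance is also feasible for $I_u$, hence $\OPT(I_u)\leq\OPT$.

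\textbf{Reinterpreting $\mathcal{S}$ as an $I_u$-solution.} In $\mathcal{S}$, each aggregated request of size $\widehat{N}_i$ is carried by some single vehicle along a contiguous sub-walk from $s^*_i$ to $t^*_i$. I would reuse the same vehicle routes in $I_u$: at the $\widehat{I}_w$-pickup point for request $i$ the $I_u$-vehicle picks up all $N_i\leq\widehat{N}_i$ unit copies, and at the delivery point it drops them all. At every point the $I_u$-load is bounded by the $\widehat{I}_w$-load (because $N_i\leq\widehat{N}_i$), which is itself at most $\widehat{\lambda}\leq\lambda$. Hence the reinterpreted routes are feasible for $I_u$ and have the same total weight as $\mathcal{S}$.

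\textbf{Bounding $\OPT(\widehat{I}_w)$ against $\OPT(I_u)$.} I would introduce the mirror weighted instance $I_w$ (capacity $\lambda$, a single request of size $N_i$ for each $(s^*_i,t^*_i)\in\P_{normal}$) and apply Lemma~\ref{reduction1} to the pair $(I_w,I_u)$ to conclude $\OPT(I_w)\leq 3\cdot\OPT(I_u)$. It then suffices to show $\OPT(\widehat{I}_w)\leq O(1)\cdot\OPT(I_w)$. The key quantitative facts are $\widehat{N}_i=\Ceil{N_i/\gamma}\gamma\leq 2N_i$ (using $N_i\geq\gamma$ for normal requests) and $\widehat{\lambda}=\Floor{\alpha}\gamma\geq\lambda(1-1/\alpha)\geq\lambda/2$ for sufficiently large $\alpha$. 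Concretely, I would take an $\OPT(I_w)$-solution, trace each route a constant number of times (say four), and partition the requests of that route across the traversals so that, under the inflated sizes $\widehat{N}_i$, the pointwise load on every traversal stays within $\widehat{\lambda}$. \textbf{The main obstacle} is producing this partition while respecting non-preemption, i.e.\ ensuring that each request is served entirely within a single traversal. I expect this to follow from an interval-coloring argument on the ``carry intervals'' of the requests along each $I_w$-route, exploiting that the original $I_w$-load is at most $\lambda$ and that the capacity-to-size ratios of $\widehat{I}_w$ and $I_w$ differ by only a constant factor.
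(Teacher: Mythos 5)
Your plan matches the paper's proof step for step: reinterpret the Corollary~\ref{thesolution} solution for $\widehat{I}_w$ as a feasible solution for $I_u$ with the same weight (using $N_i\le\widehat{N}_i\le\widehat{\lambda}\le\lambda$), introduce the mirror instance $I_w$ and invoke Lemma~\ref{reduction1} to get $\OPT(I_w)\le 3\cdot\OPT(I_u)$, and then compare $\OPT(\widehat{I}_w)$ to $\OPT(I_w)$ via $\widehat{N}_i\le 2N_i$ and $\widehat{\lambda}\ge\lambda/2$. The one step you flag as an obstacle and leave as a sketch (``I expect this to follow from an interval-coloring argument'') is exactly the step the paper fills in: since $\widehat{N}_i\le 2N_i$ and the $I_w$-load never exceeds $\lambda\le 2\widehat{\lambda}$, the $\widehat{I}_w$-load along an $\OPT(I_w)$-route never exceeds $4\widehat{\lambda}$, so four ``vehicles'' of capacity $\widehat{\lambda}$ almost suffice and at most $3$ requests straddle the partition boundaries; because $\max_i\widehat{N}_i\ll\widehat{\lambda}$ (the paper takes $\max_i N_i\le\widehat{\lambda}/3$ for large $\vertexnum$), these straddlers fit into a fifth vehicle, so a single vehicle of capacity $\widehat{\lambda}$ can emulate this with $5$ traversals, i.e.\ weight factor $2\cdot 5-1=9$. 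This gives $\OPT(\widehat{I}_w)\le 9\cdot\OPT(I_w)\le 27\cdot\OPT(I_u)$, completing the argument. You correctly identified the missing ingredient (capacity-to-size ratio constant $\Rightarrow$ constantly many straddling requests) but did not write it out; the rest of the proof is the paper's argument.

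Two minor points to tidy up if you expand this: (1) your claim that ``$I_u$ is a sub-instance of the original \ac{mDaRP}, hence $\OPT(I_u)\le\OPT$'' is fine, but the paper justifies it via the triangle inequality (an optimal solution to the full instance, after shortcutting to serve only the normal requests, is feasible for $I_u$); (2) when you say ``trace each route a constant number of times (say four),'' the paper's count is five, and the extra traversal is exactly what absorbs the non-preemptive boundary effects you worried about.
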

\begin{proof}
It is clear that $\OPT(I_u)\leq \OPT$ by the triangle inequality. 
Moreover, the solution to the weighted instance $\widehat{I}_w$ from Corollary~\ref{thesolution} is also a feasible solution for $I_u$ with the same weight. Thus, it remains to show $\OPT(\widehat{I}_w)\leq \O(1)\cdot\OPT(I_u)$.

Let $I_w$ denote the weighted \ac{mDaRP} instance corresponding to $I_u$.
By Lemma~\ref{reduction1}, there exists a solution $\W=\{W_1,\dots,W_r\}$ to $I_w$ with weight at most $3\cdot\OPT(I_u)$, i.e., $\OPT(I_w)\leq 3\cdot\OPT(I_u)$.
Since $\widehat{I}_w$ rescales capacities and request sizes, the same routes $\W$ may not be directly feasible. 

Recall that $\lambda\gg\alpha$ and $N_i\leq\widehat{N}_i\ll \widehat{\lambda}$.
Therefore, by definition, we have 
$\widehat{\lambda} = \Floor{\alpha}\cdot\gamma = \Floor{\alpha}\cdot\frac{\lambda}{\alpha}\geq \lambda - \frac{\lambda}{\alpha}\gg \frac{\lambda}{2}$ and
$\widehat{N}_i = \Ceil{\frac{N_i}{\gamma}}\cdot\gamma\leq N_i + \gamma\leq 2N_i\ll \widehat{\lambda}$.
Hence, each $\widehat{N}_i$ is at most twice $N_i$, while $\widehat{\lambda}$ is at least $\frac{\lambda}{2}$.  
In particular, if the vehicles in $\widehat{I}_w$ had capacity $4\widehat{\lambda}\geq 2\lambda$, the routes $\W$ would be feasible.  

Since $N_i\ll \widehat{\lambda}$, we may assume that $\vertexnum$ is sufficiently large so that $\max_i N_i\leq\frac{\widehat{\lambda}}{3}$. 
Consequently, for each vehicle route $W_i \in \W$, we can use up to $4 + \Ceil{\frac{(4-1) \cdot \max_i N_i}{\widehat{\lambda}}} = 5$ vehicles with capacity $\widehat{\lambda}$ simultaneously to serve all requests assigned to $W_i$.
The reason is as follows. A vehicle with capacity $4\widehat{\lambda}$ can simulate the route $W_i$. 
If we instead use four vehicles each with capacity $\widehat{\lambda}$, then at most $4-1=3$ requests are split across two vehicles. 
Since $\max_i N_i\leq\frac{\widehat{\lambda}}{3}$, these requests can be handled by one more vehicle.
Hence, to realize the same service with only a single vehicle of capacity $\widehat{\lambda}$, we may let it traverse $W_i$ up to five times.
By the triangle inequality, this increases the total route weight to at most  $(2\cdot 5-1)\cdot w(W_i)=9\cdot w(W_i)$.

Therefore, there exists a solution to $\widehat{I}_w$ with weight at most $9\cdot\OPT(I_w)$, i.e., $\OPT(\widehat{I}_w)\leq 9\cdot\OPT(I_w)$. Since $\OPT(I_w)\leq 3\cdot\OPT(I_u)$, we conclude that $\OPT(\widehat{I}_w)\leq 27\cdot\OPT(I_u)$, as required.
\end{proof}

Combining \eqref{part1small}, \eqref{part2large}, and Lemma~\ref{part3normal}, we immediately obtain the following result.

\begin{theorem}
For the \ac{mDaRP}, there exists a polynomial-time $\O\!\lrA{\sqrt{\vertexnum\log\vertexnum}}$-approximation algorithm.
\end{theorem}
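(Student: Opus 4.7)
The plan is to dispatch the easy regime by invoking Theorem~\ref{mainres3} and spend the bulk of the effort on the complementary regime. If $\rnum = O(\vertexnum^2)$ then $\sqrt[4]{\rnum}\log^{\frac{1}{2}}\rnum = O(\sqrt{\vertexnum\log\vertexnum})$, so Theorem~\ref{mainres3} already delivers the desired bound; hence I would assume $\rnum = \Omega(\vertexnum^2)$ throughout.

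In this regime my strategy is to exploit the fact that there are at most $\vertexnum(\vertexnum-1)$ distinct request types, so it becomes meaningful to classify them by multiplicity. I would set $\alpha = \Theta(\vertexnum/\log\vertexnum)$ and $\beta = \Theta(\sqrt{\vertexnum\log\vertexnum})$ and split $\P = \P_{small}\cup\P_{normal}\cup\P_{big}$ according to whether $N_i < \lambda/\alpha$, $N_i \in [\lambda/\alpha,\lambda/\beta]$, or $N_i > \lambda/\beta$. The three sub-instances would then be solved separately and the routes concatenated, so the overall weight is the sum of the three contributions.

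For $\P_{small}$ I would invoke \textsc{Alg.2}: because $|\P_{small}| \leq \vertexnum^2$ and each small type contributes at most $\lambda/\alpha$ copies, the effective $\rnum$ is $O(\vertexnum^2\lambda/\alpha)$, so Theorem~\ref{mainres2} yields a ratio of $O(\vertexnum/\sqrt{\alpha}) = O(\sqrt{\vertexnum\log\vertexnum})$, matching \eqref{part1small}. For $\P_{big}$ I would route vehicles along an $O(1)$-approximate \ac{mTSP} on $G[X\cup K]$ and, at each source of a big request, shuttle $\Ceil{N_i/\lambda}$ times between $s^*_i$ and $t^*_i$; since $N_i > \lambda/\beta$, the ceiling inflates $N_i/\lambda$ by a factor of at most $1+\beta$, so the total shuttling cost is bounded by $O(\beta)$ times the flow lower bound of Lemma~\ref{thelb}, yielding \eqref{part2large}.

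The main obstacle is $\P_{normal}$, which I would handle through the rounding pipeline adapted from~\citet{DBLP:journals/talg/GuptaHNR10}. I would round the capacity down to $\widehat{\lambda} = \Floor{\alpha}\gamma$ with $\gamma = \lambda/\alpha$ and each $N_i$ up to a multiple of $\gamma$; divide through by $\gamma$ to obtain an equivalent weighted instance $\widetilde{I}_w$ with capacity $\Floor{\alpha}$ and integer sizes bounded by $\Ceil{\alpha/\beta}$; and finally split each size-$\widetilde{N}_i$ request into $\widetilde{N}_i$ unit copies to form the unweighted instance $\widetilde{I}_u$ with $|\widetilde{\R}| = O(\vertexnum^{5/2})$. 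Applying \textsc{Alg.1} to $\widetilde{I}_u$ (capacity $O(\vertexnum/\log\vertexnum)$) yields a solution of weight $O(\sqrt{\vertexnum\log\vertexnum})\cdot\OPT(\widetilde{I}_u)$ by Theorem~\ref{mainres1}, which lifts to a solution for $\widehat{I}_w$ of the same asymptotic weight via Lemma~\ref{reduction1} and Corollary~\ref{thesolution}, and in turn to the original unweighted instance $I_u$ via Lemma~\ref{part3normal} using the constant-factor bound $\OPT(\widehat{I}_w)\leq O(1)\cdot\OPT(I_u)$ obtained by simulating each original route with a single capacity-$\widehat{\lambda}$ vehicle that traverses it up to five times. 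Summing \eqref{part1small}, \eqref{part2large}, and the $\P_{normal}$ bound gives the claimed $O(\sqrt{\vertexnum\log\vertexnum})$ approximation.
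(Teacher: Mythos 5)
Your proposal mirrors the paper's proof step by step: the same reduction to $\rnum=\Omega(\vertexnum^2)$ via Theorem~\ref{mainres3}, the same classification of $\P$ by multiplicity with thresholds $\alpha=\Theta(\vertexnum/\log\vertexnum)$ and $\beta=\Theta(\sqrt{\vertexnum\log\vertexnum})$, \textsc{Alg.2} for small requests, \ac{mTSP} plus shuttling for big requests, and the scale--round--split pipeline with Lemma~\ref{reduction1}, Corollary~\ref{thesolution}, and Lemma~\ref{part3normal} for the normal requests. The one thing you skip is a case distinction the paper makes before entering that pipeline: the bound $\OPT(\widehat{I}_w)\le \O(1)\cdot\OPT(I_u)$ in Lemma~\ref{part3normal} (the ``traverse each route up to five times'' simulation) is proved under the standing assumption $\lambda\gg\alpha$, which is what makes $\gamma=\lambda/\alpha\gg 1$, keeps $\widehat{\lambda}\ge \lambda-\lambda/\alpha$ close to $\lambda$, and guarantees $\max_i N_i\le \widehat{\lambda}/3$. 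When $\lambda=\O(\alpha)$ the rounding to multiples of $\gamma$ and the five-trip simulation degenerate, and the paper instead runs \textsc{Alg.1} directly on the normal requests, which already gives $\O(\sqrt{\lambda}\log\rnum)\le \O(\sqrt{\alpha}\log\vertexnum)=\O(\sqrt{\vertexnum\log\vertexnum})$. You should insert that branch before invoking Lemma~\ref{part3normal}; with that patch the argument is sound and matches the paper.
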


\section{Conclusion}
This paper investigates approximation algorithms for the \ac{mDaRP}. We propose an algorithm that achieves an approximation ratio of $\O(\sqrt{\vertexnum\log\vertexnum})$, even improving upon the previous best-known ratio of $\O(\sqrt{\vertexnum}\log^2\vertexnum)$ for the classic (single-vehicle) \ac{DaRP}. Our results are derived from new structural properties of the problem, which allow us to leverage techniques from the well-known \ac{CVRP}. We believe that these insights may also be applicable to other variants of the \ac{DaRP}. A key direction for future research is to determine whether the $\O(\sqrt{\vertexnum})$-approximation barrier can be surpassed.

\bibliographystyle{apalike}
\bibliography{main}
\end{document}